\documentclass[11pt]{article}

\usepackage[margin=1in]{geometry}
\usepackage{amsmath,amssymb,amsthm,bm}
\usepackage{booktabs,xcolor}
\usepackage{graphicx}
\usepackage{subcaption}
\usepackage{caption}
\usepackage{float}
\usepackage{algorithm}
\usepackage{algpseudocode}
\usepackage{natbib}
\usepackage[colorlinks=true,linkcolor=blue,citecolor=blue,urlcolor=blue]{hyperref}

\graphicspath{{figures/}{./}}

\DeclareMathOperator*{\argmin}{arg\,min}
\DeclareMathOperator{\col}{col}

\newcommand{\R}{\mathbb{R}}
\newcommand{\E}{\mathrm{E}}
\newcommand{\tr}{\mathrm{tr}}
\newcommand{\diag}{\mathrm{diag}}
\newcommand{\MN}{\mathcal{MN}}
\newcommand{\Ga}{\mathcal{G}}
\newcommand{\GIG}{\mathcal{GIG}}
\newcommand{\I}{\mathbf{I}}

\newcommand{\ExI}{\mathrm{ExI}}

\newtheorem{theorem}{Theorem}
\newtheorem{lemma}{Lemma}
\newtheorem{corollary}{Corollary}

\newcommand{\safeincludegraphics}[2][]{%
  \IfFileExists{#2}{\includegraphics[#1]{#2}}{%
    \fbox{\parbox[c][2.0in][c]{0.9\linewidth}{\centering Missing figure\\\texttt{\detokenize{#2}}}}%
  }%
}

\newcommand{\resetappendixcounters}{%
  \setcounter{equation}{0}%
  \renewcommand{\theequation}{\thesection.\arabic{equation}}%
  \renewcommand{\theHequation}{\thesection.\arabic{equation}}%
  \setcounter{theorem}{0}%
  \setcounter{lemma}{0}%
  \setcounter{corollary}{0}%
  \renewcommand{\thetheorem}{\thesection.\arabic{theorem}}%
  \renewcommand{\theHtheorem}{\thesection.\arabic{theorem}}%
  \renewcommand{\thelemma}{\thesection.\arabic{lemma}}%
  \renewcommand{\theHlemma}{\thesection.\arabic{lemma}}%
  \renewcommand{\thecorollary}{\thesection.\arabic{corollary}}%
  \renewcommand{\theHcorollary}{\thesection.\arabic{corollary}}%
  \setcounter{table}{0}%
  \setcounter{figure}{0}%
  \renewcommand{\thetable}{\thesection.\arabic{table}}%
  \renewcommand{\theHtable}{\thesection.\arabic{table}}%
  \renewcommand{\thefigure}{\thesection.\arabic{figure}}%
  \renewcommand{\theHfigure}{\thesection.\arabic{figure}}%
}

\title{Bayesian sparse principal coordinates analysis\\with delta-tolerant linear approximation for microbiome data}

\author{%
Hsin-Hsiung Huang$^{1}$, Ruitao Liu$^{2}$, Liangliang Zhang$^{2}$, and Shao-Hsuan Wang$^{3}$\\[0.75em]
\small $^{1}$School of Data, Mathematical, and Statistical Sciences, University of Central Florida, Orlando, Florida 32816, USA\\
\small $^{2}$Department of Population and Quantitative Health Sciences, Case Western Reserve University, Cleveland, Ohio 44106, USA\\
\small $^{3}$Graduate Institute of Statistics, National Central University, Taoyuan 320317, Taiwan\\
\small \texttt{hsin.huang@ucf.edu}, \texttt{picowang@gmail.com}
}
\date{}

\begin{document}
\maketitle

\begin{center}
\small This arXiv version combines the main manuscript and supplementary materials into a single preprint. The manuscript is currently under review at \emph{Biometrics}.
\end{center}

\begin{abstract}
Principal coordinates analysis (PCoA) is a standard exploratory tool for microbiome beta-diversity studies, but its axes are defined by pairwise dissimilarities and therefore do not directly identify the taxa driving an ordination. We propose Bayesian sparse principal coordinates analysis (BSPCoA), a post hoc framework that approximates the leading principal coordinates by a sparse linear surrogate in the observed taxa. A delta-tolerance diagnostic quantifies the discrepancy between the classical ordination and its best linear surrogate, clarifying when taxon-level interpretation is well supported. We place three-parameter beta normal global-local priors on the surrogate coefficients to induce row sparsity, obtain posterior uncertainty, and select influential taxa. The method reduces to sparse principal component analysis under Euclidean distance, while remaining applicable to ecologically meaningful dissimilarities such as Bray--Curtis and Hellinger distances. We conduct simulation studies to demonstrate that BSPCoA provides an approximately linear representation of the dominant ordination geometry while enhancing interpretability in sparse microbiome settings. In the Hadza gut microbiome data, the method produces an ordination close to that of classical PCoA while highlighting a parsimonious set of taxa associated with seasonal variation.
\end{abstract}

\medskip
\noindent\textbf{Keywords:} Bayesian shrinkage; beta diversity; global-local prior; microbiome; principal coordinates analysis; three-parameter beta normal; variable selection.

\section{Introduction}\label{sec:intro}

Principal coordinates analysis (PCoA), introduced by \citet{gower1966some}, is a standard exploratory tool for data that are most naturally summarized through pairwise distances or dissimilarities. It is especially attractive when scientifically meaningful dissimilarities are available, even if the original measurements are high-dimensional, sparse, or poorly represented by Euclidean covariance structure alone. In microbiome studies, PCoA is therefore a first-line method for visualizing beta diversity under distances such as Bray--Curtis and Hellinger, and the resulting low-dimensional coordinates often provide the first summary of between-sample structure.

A persistent limitation of classical PCoA is that the displayed axes are functions of the sample-to-sample dissimilarity matrix rather than explicit loadings on the original taxa. Consequently, while PCoA is well suited for visualization, it provides limited insight into which taxa are associated with the observed ordination. However, many studies require both an ordination respecting a chosen ecological distance and a sparse subset of taxa explaining the dominant geometric structure. Our goal is related to, but distinct from, constrained ordination methods such as distance-based redundancy analysis and canonical analysis of principal coordinates \citep{legendre1999,anderson2003}. Those approaches use external variables to explain or discriminate a distance matrix. By contrast, we begin with an \emph{unconstrained} PCoA and then construct a sparse \emph{post hoc} surrogate that attributes the leading principal coordinates to the original taxa. In this way, BSPCoA captures the dominant ordination geometry through an approximately linear representation while adding feature-level interpretation.

To this end, we build on the regression perspective underlying sparse principal component analysis (SPCA) \citep{zou2006sparse}. The leading PCoA coordinates are treated as pseudo-responses and approximated by a sparse linear surrogate in the original features. Related ideas arise in Bayesian multidimensional scaling and variable selection for distance-based embeddings \citep{oh2001,lin2018}, but these approaches do not directly target sparse taxon-level explanations for a fixed classical PCoA.
First, we propose a Bayesian sparse PCoA model using three-parameter beta--normal (TPBN) global--local shrinkage priors \citep{armagan2011generalized,carvalho2010}. Second, we introduce a $\delta$-tolerant diagnostic that quantifies how faithfully the leading principal coordinates can be represented by a linear surrogate. Based on this diagnostic, we establish posterior contraction results, following ideas similar to those of \citet{bai2018mbsp} and \citet{wang2025ejs}, to characterize row-sparse taxon effects and provide posterior uncertainty quantification.

The remainder of the paper is organized as follows. Section~\ref{sec:method} reviews the SPCA regression connection, develops the delta-tolerant surrogate for PCoA, and presents the Bayesian sparse PCoA algorithm. Section~\ref{sec:theory} gives an approximation-based posterior-concentration result, with formal assumptions and proofs deferred to Appendix~\ref{app:theory}. Section~\ref{sec:sim} presents Euclidean and microbiome-motivated simulations; additional Euclidean experiments and the runtime comparison are reported in Appendix~\ref{app:euclid}. Section~\ref{sec:app} contains the Hadza gut microbiome application. Section~\ref{sec:disc} concludes.

\section{Methodology}\label{sec:method}

\subsection{Review of the sparse PCA regression formulation}\label{subsec:spca}

Let $\mathbf{X}\in\R^{n\times p}$ be a centered data matrix with singular value decomposition
\begin{equation}
\mathbf{X}=\mathbf{U}\mathbf{D}\mathbf{V}^{\top},
\label{eq:svd}
\end{equation}
where $\mathbf{U}$ and $\mathbf{V}$ are orthogonal and $\mathbf{D}$ contains the singular values. The principal component score matrix is $\mathbf{Z}=\mathbf{U}\mathbf{D}$ and the columns of $\mathbf{V}$ are the corresponding loading vectors. For the $j$th principal component and any $\lambda>0$, \citet{zou2006sparse} considered the ridge regression problem
\begin{equation}
\widehat{\bm{\beta}}_{\mathrm{ridge}}
=
\argmin_{\bm{\beta}\in\R^p}
\left\{
\|\mathbf{Z}\bm{e}_j-\mathbf{X}\bm{\beta}\|_2^2+\lambda\|\bm{\beta}\|_2^2
\right\},
\label{eq:ridge}
\end{equation}
where $\bm{e}_j$ is the $j$th coordinate vector. The normalized solution of \eqref{eq:ridge} recovers the $j$th PCA loading vector. This observation extends to the first $k$ components. Let $\mathbf{A}=(\bm{\alpha}_1,\ldots,\bm{\alpha}_k)$ and $\mathbf{B}=(\bm{\beta}_1,\ldots,\bm{\beta}_k)$ be $p\times k$ matrices. Then the SPCA regression characterization of \citet{zou2006sparse} is as follows.

\begin{theorem}[Zou et al. (2006)]\label{thm:zou}
For $\lambda>0$, any minimizer of
\begin{equation}
(\widehat{\mathbf{A}},\widehat{\mathbf{B}})
=
\argmin_{\mathbf{A},\mathbf{B}}
\left\{
\|\mathbf{X}-\mathbf{X}\mathbf{B}\mathbf{A}^{\top}\|_F^2
+\lambda\sum_{j=1}^k\|\bm{\beta}_j\|_2^2
\right\}
\quad\text{subject to}\quad
\mathbf{A}^{\top}\mathbf{A}=\I_k
\label{eq:zou3}
\end{equation}
satisfies $\widehat{\bm{\beta}}_j\propto \bm{v}_j$ for $j=1,\ldots,k$, where $\bm{v}_j$ is the $j$th ordinary PCA loading vector.
\end{theorem}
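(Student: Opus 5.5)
The plan is to solve the problem in closed form by alternating over the two blocks: first $\mathbf{B}$ with $\mathbf{A}$ fixed, then $\mathbf{A}$ with $\mathbf{B}$ profiled out.

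\emph{Step 1: reduce the loss for fixed $\mathbf{A}$.} Complete $\mathbf{A}$ to an orthogonal $p\times p$ matrix $[\mathbf{A},\mathbf{A}_\perp]$. Using $\mathbf{A}^\top\mathbf{A}=\I_k$, Pythagoras gives
\[
\|\mathbf{X}-\mathbf{X}\mathbf{B}\mathbf{A}^\top\|_F^2=\|\mathbf{X}\mathbf{A}_\perp\|_F^2+\|\mathbf{X}\mathbf{A}-\mathbf{X}\mathbf{B}\|_F^2 .
\]
The objective therefore separates over columns into $k$ independent ridge problems of the form \eqref{eq:ridge}, $\|\mathbf{X}\bm\alpha_j-\mathbf{X}\bm\beta_j\|_2^2+\lambda\|\bm\beta_j\|_2^2$. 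Their solutions are
\[
\widehat{\mathbf{B}}=(\mathbf{X}^\top\mathbf{X}+\lambda\I)^{-1}\mathbf{X}^\top\mathbf{X}\,\mathbf{A}.
\]
This is unique because $\lambda>0$.

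\emph{Step 2: profile out $\mathbf{B}$.} Substitute $\widehat{\mathbf{B}}$ back into the objective. Write $\mathbf{X}^\top\mathbf{X}=\mathbf{V}\mathbf{D}^2\mathbf{V}^\top$ using \eqref{eq:svd}, so that $(\mathbf{X}^\top\mathbf{X}+\lambda\I)^{-1}\mathbf{X}^\top\mathbf{X}=\mathbf{V}\mathbf{D}^2(\mathbf{D}^2+\lambda\I)^{-1}\mathbf{V}^\top$. A routine calculation, using that a ridge fit satisfies $\min=\|y\|^2-y^\top \mathbf{X}(\mathbf{X}^\top\mathbf{X}+\lambda\I)^{-1}\mathbf{X}^\top y$, shows that the profiled criterion equals a constant minus
\[
\tr\bigl(\mathbf{A}^\top\mathbf{X}^\top\mathbf{X}(\mathbf{X}^\top\mathbf{X}+\lambda\I)^{-1}\mathbf{X}^\top\mathbf{X}\mathbf{A}\bigr)=\tr\bigl(\mathbf{A}^\top\mathbf{V}\,\mathrm{diag}\bigl(d_i^4/(d_i^2+\lambda)\bigr)\mathbf{V}^\top\mathbf{A}\bigr).
\]

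\emph{Step 3: maximize over $\mathbf{A}$ (the main obstacle).} We must maximize this trace over $\mathbf{A}$ with $\mathbf{A}^\top\mathbf{A}=\I_k$. Since $t\mapsto t^2/(t+\lambda)$ is strictly increasing in $t\ge0$, the weights $d_i^4/(d_i^2+\lambda)$ are ordered exactly as the eigenvalues $d_i^2$. Ky Fan's maximum principle then says the maximizers are exactly $\mathbf{A}=\mathbf{V}_k\mathbf{R}$, where $\mathbf{V}_k$ holds the first $k$ columns of $\mathbf{V}$ and $\mathbf{R}$ is a $k\times k$ orthogonal matrix.

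The delicate point is this rotational non-uniqueness. For the columnwise claim $\widehat{\bm\beta}_j\propto\bm v_j$ we need $\mathbf{R}$ diagonal, i.e.\ a signed identity. I would handle this by assuming the leading $k$ singular values are distinct, and by reading the statement, as Zou et al.\ do, as holding up to this rotation, or equivalently for the $\mathbf{A}$ aligned with $\mathbf{V}_k$. Failing that, one can state only that $\col(\widehat{\mathbf{B}})=\col(\mathbf{V}_k)$.

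\emph{Step 4: conclude.} Plug $\mathbf{A}=\mathbf{V}_k$ (up to signs) into Step 1:
\[
\widehat{\bm\beta}_j=\mathbf{V}\mathbf{D}^2(\mathbf{D}^2+\lambda\I)^{-1}\mathbf{V}^\top\bm v_j=\frac{d_j^2}{d_j^2+\lambda}\,\bm v_j ,
\]
so $\widehat{\bm\beta}_j\propto\bm v_j$, which is the claim.
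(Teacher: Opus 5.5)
Your proof is correct and is essentially the argument of \citet{zou2006sparse}, which the paper cites without giving a proof of its own: columnwise ridge solves for $\mathbf{B}$ given $\mathbf{A}$, profiling to the trace maximization $\max_{\mathbf{A}^\top\mathbf{A}=\I_k}\tr\{\mathbf{A}^\top\mathbf{V}\diag(d_i^4/(d_i^2+\lambda))\mathbf{V}^\top\mathbf{A}\}$, and reading off $\widehat{\bm\beta}_j=d_j^2(d_j^2+\lambda)^{-1}\bm v_j$ at $\mathbf{A}=\mathbf{V}_k$. Your rotation caveat is genuinely necessary: $(\mathbf{A},\mathbf{B})\mapsto(\mathbf{A}\mathbf{R},\mathbf{B}\mathbf{R})$ preserves $\mathbf{B}\mathbf{A}^\top$, $\|\mathbf{B}\|_F$ and the constraint for every orthogonal $\mathbf{R}$, so the paper's ``any minimizer'' wording fails in general for $k\ge 2$, whether or not the singular values are distinct (distinctness only makes the individual $\bm v_j$ well defined, and you also need $d_k>d_{k+1}$ for the Ky Fan maximizers to be exactly $\{\mathbf{V}_k\mathbf{R}\}$); the correct conclusion is therefore $\widehat{\mathbf{B}}=\mathbf{V}_k\diag\{d_j^2/(d_j^2+\lambda)\}_{j\le k}\,\mathbf{R}$, with $\widehat{\bm\beta}_j\propto\bm v_j$ for all $j$ exactly when $\mathbf{R}$ is a signed identity.
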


SPCA then adds an $\ell_1$ or elastic-net penalty to \eqref{eq:zou3}. This regression viewpoint is useful here because the leading PCoA coordinates can also be treated as pseudo-responses. The difference is that, for PCoA, the target scores arise from a distance-based eigendecomposition rather than from the covariance structure of $\mathbf{X}$ itself.

\subsection{PCoA coordinates and a sparse linear surrogate}\label{subsec:pcoa_surrogate}

Let $\mathbf{X}=(\bm{x}_1,\ldots,\bm{x}_n)^{\top}\in\R^{n\times p}$ denote the feature matrix after any prespecified preprocessing used for interpretation, and let $d(\bm{x}_i,\bm{x}_{i'})$ be a dissimilarity between samples $i$ and $i'$. Classical PCoA is based on the double-centered similarity matrix
\begin{equation}
\mathbf{G}=-\frac{1}{2}\mathbf{J}\mathbf{D}^{(2)}\mathbf{J}\mbox{~~with~~}
\mathbf{J}=\mathbf{I}_n-\frac{1}{n}\mathbf{1}_n\mathbf{1}_n^{\top},
\label{eq:G}
\end{equation}
where $\mathbf{D}^{(2)}$ is the elementwise squared dissimilarity matrix. Let
$\mathbf{G}=\mathbf{H}\bm{\Lambda}\mathbf{H}^{\top}$
be an eigendecomposition with eigenvalues $\lambda_1\ge \cdots \ge \lambda_n$. When the dissimilarity is non-Euclidean, $\mathbf{G}$ can be indefinite, so we retain only the positive eigenvalues. Let $m=\sum_{r=1}^n I(\lambda_r>0)$, $\mathbf{H}_k=(\bm{h}_1,\ldots,\bm{h}_k)$, and $\bm{\Lambda}_k=\diag(\lambda_1,\ldots,\lambda_k)$ for some $k\le m$. The $n\times k$ matrix of leading principal coordinates is
\begin{equation}
\mathbf{Z}_k=\mathbf{H}_k\bm{\Lambda}_k^{1/2}.
\label{eq:Zk}
\end{equation}

BSPCoA is a post hoc surrogate for the classical ordination. Once $\mathbf{Z}_k$ is computed, we seek a coefficient matrix $\mathbf{B}\in\R^{p\times k}$ such that
\begin{equation}
\mathbf{Z}_k \approx \mathbf{X}\mathbf{B}.
\label{eq:linsur}
\end{equation}
Thus the surrogate coordinates $\mathbf{X}\mathbf{B}$ are directly interpretable in the original taxa. When the dissimilarity is Euclidean and the columns of $\mathbf{X}$ are centered, $\mathbf{G}=\mathbf{X}\mathbf{X}^{\top}$, so $\mathbf{Z}_k$ coincides with the leading principal component scores and the framework reduces to SPCA. A natural frequentist analogue is the following row-sparse regression problem
\begin{equation}
\widehat{\mathbf{B}}_{\mathrm{spcoa}}=
\argmin_{\mathbf{B}\in\mathbb{R}^{p\times k}}
\left\{
\|\mathbf{Z}_k-\mathbf{X}\mathbf{B}\|_F^2
+\xi\|\mathbf{B}\|_F^2
+\xi_1\sum_{j=1}^p\|\bm{b}_j\|_1
\right\},
\label{eq:spreg}
\end{equation}
where $\bm{b}_j^{\top}$ denotes the $j$th row of $\mathbf{B}$.
The penalty $\sum_{j=1}^p \|\bm{b}_j\|_1$ promotes joint selection of taxa across the retained ordination axes, thereby inducing a row-sparse structure consistent with the Bayesian formulation described below.

\subsection{Delta tolerance and explanation index}\label{subsec:delta}

To quantify how well a linear surrogate explains the leading PCoA geometry, we define
\begin{equation}
\delta(\mathbf{B})=
\frac{\|\mathbf{Z}_k-\mathbf{X}\mathbf{B}\|_F}{\|\mathbf{Z}_k\|_F}
\mbox{~~and~~}
\ExI(\mathbf{B})=
\frac{\tr\{(\mathbf{X}\mathbf{B})^{\top}\mathbf{Z}_k\}}
{\|\mathbf{X}\mathbf{B}\|_F\,\|\mathbf{Z}_k\|_F}.
\label{eq:exi_delta}
\end{equation}
The quantity $\delta(\mathbf{B})$ is a relative reconstruction error, whereas $\ExI(\mathbf{B})$ is the cosine of the angle between $\mathbf{X}\mathbf{B}$ and $\mathbf{Z}_k$ in Frobenius inner-product geometry. For a generic $\mathbf{B}$, $\delta(\mathbf{B})$ need not be bounded by one, but the best achievable value $\delta_{\star}=\min_{\mathbf{B}}\delta(\mathbf{B})$ always lies in $[0,1]$. Lemma~\ref{lem:bestlin} shows that $\delta_{\star}$ is the irreducible loss from forcing a linear surrogate onto an ordination defined through an arbitrary dissimilarity. In the Euclidean case with centered $\mathbf{X}$, the leading PCA loading matrix $\mathbf{V}_k$ satisfies $\mathbf{Z}_k=\mathbf{X}\mathbf{V}_k$, and therefore $\delta_{\star}=0$.

\begin{lemma}\label{lem:bestlin}
Let $\mathbf{X}^{+}$ denote the Moore--Penrose pseudoinverse of $\mathbf{X}$ and let $\mathbf{P}_{X}=\mathbf{X}\mathbf{X}^{+}$ be the orthogonal projector onto $\col(\mathbf{X})$. Then any minimizer of $\min_{\mathbf{B}}\|\mathbf{Z}_k-\mathbf{X}\mathbf{B}\|_F^2$ satisfies $\mathbf{X}\mathbf{B}_{\star}=\mathbf{P}_{X}\mathbf{Z}_k$; one convenient choice is $\mathbf{B}_{\star}=\mathbf{X}^{+}\mathbf{Z}_k$. Moreover,
\begin{equation}
\delta_{\star}^2=
\frac{\|(\mathbf{I}_n-\mathbf{P}_X)\mathbf{Z}_k\|_F^2}{\|\mathbf{Z}_k\|_F^2}
=
1-\frac{\|\mathbf{P}_X\mathbf{Z}_k\|_F^2}{\|\mathbf{Z}_k\|_F^2},
\label{eq:deltastar}
\end{equation}
and $\ExI(\mathbf{B}_{\star})^2=1-\delta_{\star}^2$.
\end{lemma}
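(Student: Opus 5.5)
The plan is to treat the minimization column by column as an ordinary least-squares problem and use the orthogonal decomposition $\R^n=\col(\mathbf{X})\oplus\col(\mathbf{X})^{\perp}$. Since $\|\mathbf{Z}_k-\mathbf{X}\mathbf{B}\|_F^2=\sum_{j=1}^k\|\mathbf{Z}_k\bm{e}_j-\mathbf{X}\mathbf{B}\bm{e}_j\|_2^2$ and the columns of $\mathbf{B}$ enter separately, it suffices to treat one column. For any $\mathbf{B}$, I would write
\[
\mathbf{Z}_k-\mathbf{X}\mathbf{B}=(\mathbf{I}_n-\mathbf{P}_X)\mathbf{Z}_k+(\mathbf{P}_X\mathbf{Z}_k-\mathbf{X}\mathbf{B}).
\]
The first term has columns in $\col(\mathbf{X})^{\perp}$. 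The second has columns in $\col(\mathbf{X})$, because $\mathbf{P}_X\mathbf{X}=\mathbf{X}$. The Pythagorean identity in the Frobenius inner product then gives
\[
\|\mathbf{Z}_k-\mathbf{X}\mathbf{B}\|_F^2=\|(\mathbf{I}_n-\mathbf{P}_X)\mathbf{Z}_k\|_F^2+\|\mathbf{P}_X\mathbf{Z}_k-\mathbf{X}\mathbf{B}\|_F^2 .
\]
The first term does not depend on $\mathbf{B}$. Hence a matrix $\mathbf{B}$ is a minimizer if and only if $\mathbf{X}\mathbf{B}=\mathbf{P}_X\mathbf{Z}_k$, and this equation is solvable because $\mathbf{P}_X\mathbf{Z}_k\in\col(\mathbf{X})$.

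For the explicit choice $\mathbf{B}_\star=\mathbf{X}^{+}\mathbf{Z}_k$, I will use the Penrose identity: $\mathbf{X}\mathbf{X}^{+}$ is the orthogonal projector onto $\col(\mathbf{X})$. This gives $\mathbf{X}\mathbf{B}_\star=\mathbf{P}_X\mathbf{Z}_k$ directly. Dividing the minimal value by $\|\mathbf{Z}_k\|_F^2$ gives the first expression for $\delta_\star^2$. Here $\|\mathbf{Z}_k\|_F^2=\tr(\bm{\Lambda}_k)>0$ because $k\le m$, so the division is legitimate. Applying Pythagoras again to $\mathbf{Z}_k=\mathbf{P}_X\mathbf{Z}_k+(\mathbf{I}_n-\mathbf{P}_X)\mathbf{Z}_k$ gives $\|\mathbf{Z}_k\|_F^2=\|\mathbf{P}_X\mathbf{Z}_k\|_F^2+\|(\mathbf{I}_n-\mathbf{P}_X)\mathbf{Z}_k\|_F^2$, which yields the second expression and also shows $\delta_\star\in[0,1]$.

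For the explanation index, I would compute the numerator $\tr\{(\mathbf{P}_X\mathbf{Z}_k)^{\top}\mathbf{Z}_k\}$ using the symmetry and idempotence of $\mathbf{P}_X$:
\[
\tr(\mathbf{Z}_k^{\top}\mathbf{P}_X^{\top}\mathbf{Z}_k)=\tr(\mathbf{Z}_k^{\top}\mathbf{P}_X^{\top}\mathbf{P}_X\mathbf{Z}_k)=\|\mathbf{P}_X\mathbf{Z}_k\|_F^2 .
\]
Therefore $\ExI(\mathbf{B}_\star)=\|\mathbf{P}_X\mathbf{Z}_k\|_F/\|\mathbf{Z}_k\|_F$. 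Squaring and comparing with the second expression gives $\ExI(\mathbf{B}_\star)^2=1-\delta_\star^2$.

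None of the steps is substantive; the only delicate point is a degenerate case. If $\mathbf{P}_X\mathbf{Z}_k=\mathbf{0}$, then $\ExI(\mathbf{B}_\star)$ is a $0/0$ expression. I would handle this by stating the identity under the convention $\ExI=0$ in that case, which is consistent with $\delta_\star=1$. I would also note that the identity holds for every minimizer, since all minimizers share the same fitted value $\mathbf{X}\mathbf{B}$ and $\ExI$ depends on $\mathbf{B}$ only through $\mathbf{X}\mathbf{B}$.
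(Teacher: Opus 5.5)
Your proposal is correct and follows the paper's proof essentially step for step. It uses the same orthogonal decomposition $\mathbf{Z}_k-\mathbf{X}\mathbf{B}=(\mathbf{I}_n-\mathbf{P}_X)\mathbf{Z}_k+(\mathbf{P}_X\mathbf{Z}_k-\mathbf{X}\mathbf{B})$, the Pythagorean identity, and the trace computation $\tr\{(\mathbf{X}\mathbf{B}_\star)^{\top}\mathbf{Z}_k\}=\|\mathbf{P}_X\mathbf{Z}_k\|_F^2$. Your extra remarks fill in small points the paper leaves implicit: the positivity $\|\mathbf{Z}_k\|_F^2=\tr(\bm{\Lambda}_k)>0$, the second Pythagoras step giving the $1-\|\mathbf{P}_X\mathbf{Z}_k\|_F^2/\|\mathbf{Z}_k\|_F^2$ form, and a convention for the degenerate case $\mathbf{P}_X\mathbf{Z}_k=\mathbf{0}$.
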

\begin{proof}
The proof is deferred to Appendix~\ref{app:theory}.
\end{proof}

\subsection{Bayesian sparse PCoA}\label{subsec:bayes}

We now place a Bayesian row-sparse model on the surrogate coefficients. Given the pseudo-response matrix $\mathbf{Y}=\mathbf{Z}_k\in\R^{n\times k}$, we consider the working multivariate regression model
\begin{equation}
\mathbf{Y}=\mathbf{X}\mathbf{B}+\mathbf{E},
\qquad
\mathbf{E}\sim \MN_{n\times k}(\mathbf{0},\mathbf{I}_n,\mathbf{I}_k).
\label{eq:working}
\end{equation}
Because $\mathbf{Y}$ is a deterministic function of the observed distance matrix, model \eqref{eq:working} should be interpreted as a working model that induces shrinkage and uncertainty quantification for the linear surrogate, rather than as a generative model for the original microbiome counts.

To encourage row sparsity, we place a shared local scale on each row of $\mathbf{B}=(\bm{b}_1,\ldots,\bm{b}_p)^{\top}$:
\begin{align}
\mathbf{B}\mid \psi_1,\ldots,\psi_p
&\sim \MN_{p\times k}\!\left(\mathbf{0},\diag(\psi_1,\ldots,\psi_p),\mathbf{I}_k\right), \notag\\
\psi_j\mid \zeta_j &\overset{\mathrm{ind}}{\sim} \Ga(u,\zeta_j), \qquad
\zeta_j \overset{\mathrm{iid}}{\sim} \Ga(a,\tau), \qquad j=1,\ldots,p .
\label{eq:tpbn}
\end{align}
This TPBN hierarchy includes the horseshoe prior as the special case $u=a=1/2$ \citep{carvalho2010,armagan2011generalized}. In the multivariate regression setting, related TPBN constructions and their theoretical guarantees have been studied by \citet{bai2018mbsp} and extended to mixed-type multivariate responses by \citet{wang2025ejs}. Throughout the numerical experiments, we adopt the horseshoe specification together with $\tau=(pn\log n)^{-1}$.

Conditional on $(\psi_1,\ldots,\psi_p)$, the posterior update for $\mathbf{B}$ is matrix normal:
\[
\mathbf{B}\mid \mathbf{Y},\mathbf{X},\psi_1,\ldots,\psi_p
\sim
\MN_{p\times k}(\mathbf{M}_B,\mathbf{V}_B,\mathbf{I}_k),
\]
where
\[
\mathbf{V}_B=(\mathbf{X}^{\top}\mathbf{X}+\bm{\Psi}^{-1})^{-1},
\qquad
\mathbf{M}_B=\mathbf{V}_B\mathbf{X}^{\top}\mathbf{Y},
\qquad
\bm{\Psi}=\diag(\psi_1,\ldots,\psi_p).
\]
For each row $j$,
\[
\psi_j\mid \zeta_j,\bm{b}_j
\sim
\GIG\!\left(u-\frac{k}{2},\ \|\bm{b}_j\|_2^2,\ 2\zeta_j\right),
\qquad
\zeta_j\mid \psi_j
\sim
\Ga(a+u,\tau+\psi_j).
\]

Under the working model \eqref{eq:working} and the TPBN prior \eqref{eq:tpbn},
the proposed Bayesian procedure is equivalent to solving the penalized
least-squares problem
\begin{equation}
\widetilde{\mathbf{B}}_{\mathrm{bspcoa}}
=
\argmin_{\mathbf{B}\in\mathbb{R}^{p\times k}}
\|\mathbf{Z}_k-\mathbf{X}\mathbf{B}\|_F^2
+ g(\mathbf{B}),
\label{eq:bspcoa_penalized}
\end{equation}
where $g(\mathbf{B})$ denotes the penalty function induced by the prior
distribution on $\mathbf{B}$. We further consider a two-stage exploratory procedure: first performing PCA and then
seeking a suitable sparse approximation. Specifically, motivated by
Theorem~\ref{thm:zou}, we consider the optimization problem
\begin{equation}
(\widehat{\mathbf{A}}_{\mathrm{bspcoa}},
\widehat{\mathbf{B}}_{\mathrm{bspcoa}})
=
\argmin_{\mathbf{A}\in\mathbb{R}^{n\times k},~\mathbf{B}\in\mathbb{R}^{p\times k}}
\|\mathbf{\Psi}_c^{1/2}-\mathbf{X}\mathbf{B}\mathbf{A}^\top\|_F^2
+ g(\mathbf{B})
\quad
\text{subject to}\quad
\mathbf{A}^\top\mathbf{A}=\mathbf{I}_k.
\label{eq:bspcoa_joint_opt}
\end{equation}
The resulting estimator admits Algorithm~\ref{alg:bspcoa}.

\begin{algorithm}[H]
\caption{Bayesian sparse PCoA}
\label{alg:bspcoa}
\begin{algorithmic}[1]
\Require Feature matrix $\mathbf{X}\in\R^{n\times p}$, target dimension $k$, dissimilarity $d(\cdot,\cdot)$, hyperparameters $(a,u,\tau)$, and number of MCMC iterations $T$
\Ensure Posterior summary $\widehat{\mathbf{B}}$ and fitted sparse coordinates $\mathbf{X}\widehat{\mathbf{B}}$ and $\widehat{\mathbf{A}}$
\State \textbf{Step 1:} Based on data matrix $\mathbf{X}$, choose a specific dissimilarity distance to obtain an $n\times n$ similarity matrix $\mathbf{\Psi}_c$. Take the leading $k$ eigenvectors of $\mathbf{\Psi}_c$ as the initial value of $\mathbf{A}$.
\State \textbf{Step 2:} Given $\mathbf{A}$, estimate $\mathbf{B}$ based on a linear model:
\State \hspace{1em}\textbf{Step 2.1:} Generate a Gaussian random matrix $\mathbf{E}\sim {\cal MN}_{n\times k}(\mathbf{O}, \mathbf{I}_n, \mathbf{I}_k)$ and set $\mathbf{Y}= \mathbf{\Psi}_c^{1/2} \mathbf{A}+\mathbf{E}$.
\State \hspace{1em}\textbf{Step 2.2:} Implement Gibbs sampling as below.
\begin{itemize}
    \item[] $\mathbf{B} \mid \psi_1, \ldots, \psi_p, \mathbf{Y}, \mathbf{X} \sim {\cal MN}_{p\times k}\!\left(\left(\mathbf{X}^\top \mathbf{X} + \mathbf{\Sigma}_{\mathbf{\Psi}}^{-1}\right)^{-1} \mathbf{X}^\top\mathbf{Y}, \left(\mathbf{X}^\top \mathbf{X} + \mathbf{\Sigma}_{\mathbf{\Psi}}^{-1}\right)^{-1}, \mathbf{I}_k\right)$, where $\mathbf{\Sigma}_{\mathbf{\Psi}}^{-1} = {\rm diag}( \psi_1^{-1}, \ldots,\psi_p^{-1})$,
    \item[] $\psi_j \mid \zeta_j, \mathbf{B} \sim \mathcal{GIG}\left(2 \zeta_j, (\mathbf{B} \mathbf{B}^\top)_{jj}, -\frac{q}{2} + u\right)$,~~$j=1,\dots,p$,
    \item[] $\zeta_j \mid \psi_j \sim \mathcal{G}(a + u, \tau + \psi_j)$,~~$j=1,\dots,p$.
\end{itemize}
\State \hspace{1em}\textbf{Step 2.3:} Update $\mathbf{B}$ as the MAP based on $95\%$ credible intervals after discarding the burn-in sample.
\State \textbf{Step 3:} Given $\mathbf{B}$, estimate $\mathbf{A}$ by deriving a solution to the Procrustes criterion: compute the singular value decomposition $\mathbf{\Psi}_c^{1/2}\mathbf{X} \mathbf{B}={\mathbb U}{\mathbb D}{\mathbb V}^\top$, then update $\mathbf{A}={\mathbb U}_{[,1:k]}{\mathbb V}^\top$.
\State \textbf{Step 4:} Repeat Steps 2--3 until convergence and obtain estimates of $(\mathbf{A}, \mathbf{B})$.
\end{algorithmic}
\end{algorithm}

\section{Posterior concentration for $\delta$-tolerant surrogates}\label{sec:theory}

To assess whether BSPCoA is able to recover the dominant trend of the PCoA coordinates, we adopt an approximation-based viewpoint. Rather than requiring the leading PCoA coordinates to admit an exact linear representation in the original feature space, we consider a set of linear surrogates that approximate the ordination geometry within a prescribed tolerance level. Formally, define
\[
\mathcal{B}_0^{\delta}=
\left\{
\mathbf{B}_0\in\mathbb{R}^{p\times k}:
\|\mathbf{Z}_k-\mathbf{X}\mathbf{B}_0\|_F
\le
\delta\|\mathbf{Z}_k\|_F
\right\}.
\]

Elements of $\mathcal{B}_0^{\delta}$ are called \emph{$\delta$-tolerant linear surrogates} for the leading PCoA coordinates. When $\delta$ is small, the dominant distance-based geometry can be well approximated by a sparse linear representation in the original taxa space. This notion differs from classical consistency, which typically assumes that the true parameter lies exactly in the model class. In contrast, $\mathcal{B}_0^{\delta}$ allows for controlled approximation error, reflecting the fact that the PCoA coordinates may not admit an exact sparse linear representation, especially when the dissimilarity measure is non-Euclidean. Consequently, our theoretical analysis focuses on posterior concentration around the set of approximate surrogates $\mathcal{B}_0^{\delta}$. We work under the same high-dimensional design conditions and TPBN prior regularity assumptions used by \citet{bai2018mbsp} and \citet{wang2025ejs}; the formal assumptions and proofs are given in Appendix~\ref{app:theory}. Throughout this section, the expectation $\E_{\mathbf{B}_0}=\E(\cdot\mid \mathbf{B}_0)$ and posterior ${\rm P}(\cdot\mid \mathbf{Y},\mathbf{X})$ are taken with respect to the working model in \eqref{eq:working}.

\begin{theorem}\label{thm:posterior}
Suppose the design and prior conditions stated in Appendix~\ref{app:theory} hold and $\|\mathbf{Z}_k\|_F^2=O(n)$. Then for every $\delta>0$ there exist positive constants $c$ and $C$ such that, for any $\varepsilon>c\delta$,
\[
\sup_{\mathbf{B}_0\in\mathcal{B}_0^{\delta}}
\E_{\mathbf{B}_0}\left [
{\rm P}
\!\left(
\mathbf{B}:
\|\mathbf{B}-\mathbf{B}_0\|_F>\varepsilon
\;\middle|\;
\mathbf{Y}=\mathbf{Z}_k,\mathbf{X}
\right)\right]
\le
\exp(-Cn\varepsilon^2)
\]
for all sufficiently large $n$.
\end{theorem}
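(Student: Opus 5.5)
The strategy is to reduce the statement to the exact-model posterior contraction result of \citet{bai2018mbsp} (as extended in \citet{wang2025ejs}) under the working model \eqref{eq:working}, and then absorb the misspecification. The misspecification is simply that the observed pseudo-response $\mathbf{Z}_k$ differs from the mean $\mathbf{X}\mathbf{B}_0$ by a deterministic residual $\mathbf{R}=\mathbf{Z}_k-\mathbf{X}\mathbf{B}_0$ with $\|\mathbf{R}\|_F\le\delta\|\mathbf{Z}_k\|_F=O(\delta\sqrt n)$. Fix $\mathbf{B}_0\in\mathcal{B}_0^{\delta}$ and write $\mathbf{Y}=\mathbf{X}\mathbf{B}_0+\mathbf{E}$ under $\E_{\mathbf{B}_0}$. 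The plan has three steps.

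\textbf{Step 1 (testing and prior mass).} Following the Ghosal--Ghosh--van der Vaart scheme used by \citet{bai2018mbsp}, I would split the posterior probability of $\{\|\mathbf{B}-\mathbf{B}_0\|_F>\varepsilon\}$ into a numerator and a denominator, and bound each:
\begin{itemize}
\item \emph{Numerator.} Construct tests $\phi_n$ based on the least-squares fit restricted to sparse supports. Under the design condition, the restricted eigenvalues of $\mathbf{X}^\top\mathbf{X}/n$ are bounded below on row-sparse sets. This gives $\E_{\mathbf{B}_0}\phi_n\le e^{-C_1n\varepsilon^2}$ and $\sup_{\|\mathbf{B}-\mathbf{B}_0\|_F>\varepsilon}\E_{\mathbf{B}}(1-\phi_n)\le e^{-C_1n\varepsilon^2}$. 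The sieve would be chosen so that its complement has prior mass at most $e^{-C_2 n\varepsilon^2}$, using the TPBN tail bounds with $\tau$ small.
\item \emph{Denominator.} Show that the prior puts mass at least $e^{-C_3 n\varepsilon^2}$ on a Kullback--Leibler neighbourhood of $\mathbf{B}_0$. This follows from the TPBN prior concentration lemmas of \citet{bai2018mbsp}, given the size conditions on $\mathbf{B}_0$ assumed in the Appendix.
\end{itemize}
I would invoke these ingredients directly rather than reprove them.

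\textbf{Step 2 (transfer from random to deterministic data).} The statement evaluates the posterior at $\mathbf{Y}=\mathbf{Z}_k=\mathbf{X}\mathbf{B}_0+\mathbf{R}$, so the result must hold for data shifted by $\mathbf{R}$. Because the likelihood is Gaussian, the log-likelihood ratio between data $\mathbf{Z}_k$ and data $\mathbf{X}\mathbf{B}_0+\mathbf{E}$ differs by terms linear in $\mathbf{R}$, namely $\tr\{\mathbf{R}^\top\mathbf{X}(\mathbf{B}-\mathbf{B}_0)\}$, plus constants that cancel between numerator and denominator. By Cauchy--Schwarz and the upper eigenvalue bound $\|\mathbf{X}(\mathbf{B}-\mathbf{B}_0)\|_F\le C_4\sqrt n\|\mathbf{B}-\mathbf{B}_0\|_F$ on the sieve, this perturbation is at most $C_5 n\delta\|\mathbf{B}-\mathbf{B}_0\|_F$. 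Meanwhile the quadratic separation from the tests is of order $n\|\mathbf{B}-\mathbf{B}_0\|_F^2$. Hence, on shells $\|\mathbf{B}-\mathbf{B}_0\|_F\in(j\varepsilon,(j+1)\varepsilon]$, the perturbation eats only a fraction of the exponent provided $\varepsilon>c\delta$ with $c$ large relative to $C_5/C_1$. This is exactly where the threshold $c\delta$ comes from. Summing the geometric series over shells yields $e^{-Cn\varepsilon^2}$.

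\textbf{Step 3 (uniformity).} All constants depend only on the design and prior conditions and on the bound $\|\mathbf{Z}_k\|_F^2=O(n)$, and not on the particular $\mathbf{B}_0$. Therefore the bound holds uniformly over $\mathcal{B}_0^{\delta}$, which gives the supremum.

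\textbf{Main obstacle.} I expect the hard part to be Step 2: making the perturbation argument rigorous inside the denominator lower bound. After the shift by $\mathbf{R}$, the evidence lower bound must be controlled deterministically rather than in probability. I would first try writing the denominator integral as $\int\exp\{-\tfrac12\|\mathbf{X}(\mathbf{B}-\mathbf{B}_0)\|_F^2+\tr(\mathbf{R}^\top\mathbf{X}(\mathbf{B}-\mathbf{B}_0))\}\,d\Pi$ over a ball of radius $\varepsilon$, and bounding the linear term below by $-C_5n\delta\varepsilon\ge -C_5n\varepsilon^2/c$. The remaining concern is that $\mathbf{B}_0$ need not be sparse. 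If that breaks the prior-mass bound, I would restrict attention to the sparse elements of $\mathcal{B}_0^\delta$ allowed by the Appendix conditions.
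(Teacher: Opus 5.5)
Your route is sound at the level of detail of the paper's proof, but it uses a different mechanism. The paper applies Bai's tests directly to the shifted data $\mathbf{Y}=\mathbf{X}\mathbf{B}_0+\mathbf{R}_0+\mathbf{E}$. It then uses the restricted pseudo-inverse bound $\|(\mathbf{X}_S^\top\mathbf{X}_S)^{-1}\mathbf{X}_S^\top\mathbf{R}_0\|_F\le C_2\delta$, which comes from the lower sparse eigenvalue in (A2), to argue that every least-squares contrast moves by $O(\delta)$ uniformly over sparse supports. Hence the test errors survive with $\varepsilon$ replaced by $\varepsilon-C_3\delta$, and the prior-mass side is declared unchanged.

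You instead leave the exact-model tests untouched; in effect you test with $\mathbf{Y}-\mathbf{R}$, whose law is exactly $P_{\mathbf{B}_0}$. You then move the misspecification into a deterministic tilt $e^{\langle\mathbf{R},\mathbf{X}(\mathbf{B}-\mathbf{B}_0)\rangle}\le e^{C_5 n\delta\|\mathbf{B}-\mathbf{B}_0\|_F}$, controlled by Cauchy--Schwarz and the global upper eigenvalue bound. Peeling over shells makes $nj^2\varepsilon^2$ dominate $n\delta(j+1)\varepsilon$.

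The paper's version is shorter, needs no peeling, and ties the threshold $c\delta$ to a single bias inequality. Yours exposes a cost the paper's sketch does not track. Once expectations are taken under the true law $N(\mathbf{Z}_k,\mathbf{I})$, your tilt multiplies the numerator when one normalizes by $p_{\mathbf{B}_0}$. If one normalizes by the true density instead, the evidence drops by a factor $e^{-O(n\delta^2)}$. Your peeling step and your \emph{Main obstacle} bound $-C_5n\delta\varepsilon\ge -C_5n\varepsilon^2/c$ are exactly the bookkeeping needed.

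Two points need tightening.
\begin{itemize}
\item The log-likelihood ratio changes only by $\tr\{\mathbf{R}^\top\mathbf{X}(\mathbf{B}-\mathbf{B}_0)\}$ under the paper's reading $\mathbf{Y}=\mathbf{Z}_k+\mathbf{E}$, with the same $\mathbf{E}$ as in $\mathbf{X}\mathbf{B}_0+\mathbf{E}$. For a literally fixed $\mathbf{Y}=\mathbf{Z}_k$, a term $-\langle\mathbf{E},\mathbf{X}(\mathbf{B}-\mathbf{B}_0)\rangle$ survives. In that case, however, tests are unnecessary and a direct deterministic bound on numerator and denominator suffices.
\item The tilt grows exponentially in $\mathbf{B}$ along $\mathbf{X}^\top\mathbf{R}$, while the horseshoe/TPBN prior has only polynomial tails. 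So $\int_{\mathcal{F}^c}e^{\langle\mathbf{X}^\top\mathbf{R},\mathbf{B}-\mathbf{B}_0\rangle}\,d\Pi(\mathbf{B})$ can be infinite, and the sieve complement cannot be dispatched by $\Pi(\mathcal{F}^c)$ as in the exact model. You can handle it in either of two ways:
\begin{itemize}
\item let the tests also cover large $\|\mathbf{X}(\mathbf{B}-\mathbf{B}_0)\|_F$;
\item bound $\Pi(\mathcal{F}^c\mid\mathbf{Y})$ with the likelihood normalized by the true density, which costs only $e^{O(n\delta^2)}\le e^{O(n\varepsilon^2/c^2)}$ in the evidence.
\end{itemize}
\end{itemize}
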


Theorem~\ref{thm:posterior} shows that the posterior need not concentrate around an exact representation of the PCoA coordinates, which may not exist under a non-Euclidean dissimilarity. Instead, the posterior concentrates around the best available linear surrogate up to the tolerance level $\delta$. The condition $\varepsilon>c\delta$ explicitly separates statistical estimation error from surrogate approximation error.

\begin{corollary}\label{cor:euclid}
Under Euclidean distance with centered $\mathbf{X}$, suppose $\log p=o(n)$ and the conditions of Theorem~\ref{thm:posterior} hold. Then for any $\varepsilon>0$,
\[
\sup_{\mathbf{B}_0\in\mathcal{B}_0^0}
\E_{\mathbf{B}_0}
\left[{\rm P}
\!\left(
\mathbf{B}:
\|\mathbf{B}-\mathbf{B}_0\|_F
>
\varepsilon
\left(\frac{\log p}{n}\right)^{1/2}
\;\middle|\;
\mathbf{Y}=\mathbf{Z}_k,\mathbf{X}
\right)\right]
\longrightarrow 0
\qquad
\text{as }
\min\{n,p\}\to\infty.
\]
\end{corollary}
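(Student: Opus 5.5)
The plan is to reduce Corollary~\ref{cor:euclid} to the exact-model posterior contraction result for multivariate Bayesian regression with TPBN priors of \citet{bai2018mbsp}, specialized through Lemma~\ref{lem:bestlin}. The first step is to identify the target set. Under Euclidean distance with centered $\mathbf{X}$ we have $\mathbf{G}=\mathbf{X}\mathbf{X}^{\top}$, so $\mathbf{Z}_k=\mathbf{X}\mathbf{V}_k$. Hence $\mathbf{Z}_k\in\col(\mathbf{X})$, and by Lemma~\ref{lem:bestlin} $\delta_\star=0$. Therefore $\mathcal{B}_0^0=\{\mathbf{B}_0:\mathbf{X}\mathbf{B}_0=\mathbf{Z}_k\}$ is nonempty and equals $\mathbf{X}^{+}\mathbf{Z}_k+\ker(\mathbf{X})^{\otimes k}$. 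For every $\mathbf{B}_0\in\mathcal{B}_0^0$, the working model \eqref{eq:working} reads $\mathbf{Y}=\mathbf{X}\mathbf{B}_0+\mathbf{E}$ with no misspecification term. This places us exactly in the correctly specified multivariate regression setting.

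The second step is to verify the rate hypotheses of \citet{bai2018mbsp}, which the design and prior conditions of Appendix~\ref{app:theory} are assumed to mirror. These are:
\begin{itemize}
\item restricted eigenvalue bounds on $\mathbf{X}^{\top}\mathbf{X}/n$ over row-sparse directions, which make $\mathbf{B}_0$ identifiable among sparse representatives;
\item row sparsity $s$ of $\mathbf{B}_0$ with $s\log p=o(n)$, reducing to $\log p=o(n)$ for fixed $s$ and $k$;
\item bounded row norms of $\mathbf{B}_0$, which follow from $\|\mathbf{Z}_k\|_F^2=O(n)$ together with the eigenvalue lower bound;
\item the TPBN scale choice $\tau=(pn\log n)^{-1}$, or more generally $\tau\asymp p^{-1}n^{-\rho}$.
\end{itemize}
Under these, their Theorem gives
\[
\sup_{\mathbf{B}_0}\E_{\mathbf{B}_0}{\rm P}\bigl(\|\mathbf{B}-\mathbf{B}_0\|_F>\varepsilon\sqrt{\log p/n}\mid\mathbf{Y},\mathbf{X}\bigr)\to0 .
\]
Their proof proceeds by three ingredients:
\begin{itemize}
\item a prior-mass lower bound $\Pi(\|\mathbf{B}-\mathbf{B}_0\|_F<\epsilon_n)\ge e^{-Cn\epsilon_n^2}$, using the polynomial-tail and concentration-near-zero properties of the TPBN marginal;
\item test construction on a sieve of effectively $s$-row-sparse matrices;
\item a bound showing the prior assigns negligible mass outside the sieve.
\end{itemize}
I would simply cite this result, checking that the shared-row-scale prior \eqref{eq:tpbn} matches theirs, and note that the uniformity over $\mathcal{B}_0^0$ comes from the constants depending only on $s$, $k$ and the eigenvalue bounds.

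The third step interprets the statement "$\mathbf{Y}=\mathbf{Z}_k$". The expectation is over the working model, so the data fed to the posterior are $\mathbf{X}\mathbf{B}_0+\mathbf{E}$. This coincides with Theorem~\ref{thm:posterior} at $\delta=0$, where the condition $\varepsilon>c\delta$ becomes vacuous. The corollary then sharpens the fixed-$\varepsilon$ exponential bound to the rate $\sqrt{\log p/n}$ by applying the argument with $\epsilon_n=\varepsilon\sqrt{\log p/n}$. The bound $\exp(-Cn\epsilon_n^2)=p^{-C\varepsilon^2}$ vanishes as $p\to\infty$, which is why the limit is taken as $\min\{n,p\}\to\infty$.

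The main obstacle is non-identifiability when $p>n$. Then $\mathcal{B}_0^0$ is an affine subspace, and a posterior cannot concentrate within $\varepsilon\sqrt{\log p/n}$ of every point of it simultaneously. The supremum must therefore be read over row-sparse representatives: those with at most $s$ nonzero rows, with $s$ below the restricted-eigenvalue sparsity level. Restricted eigenvalues guarantee such a representative is unique, since two $s$-sparse solutions differ by a $2s$-sparse kernel element, which must vanish. I would make this restriction explicit, as in the Appendix assumptions, and check that $\mathbf{V}_k$ (or its sparse SPCA counterpart) lies in this class.
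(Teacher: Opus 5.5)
Your argument is the paper's: under Euclidean distance with centered $\mathbf{X}$, $\mathbf{Z}_k=\mathbf{X}\mathbf{V}_k$ forces $\delta=0$ and $\mathbf{R}_0=\mathbf{0}$, so the working model is exactly the multivariate regression of \citet{bai2018high}, and the $(\log p/n)^{1/2}$ rate is cited from there. Your remark that $\mathcal{B}_0^0$ is an unbounded affine set when $p>n$, so the supremum can only hold over identifiable sparse, bounded representatives, is a fair point that the paper leaves implicit in (A2) and (B2). Note, though, that the membership check you defer is not automatic: $\mathbf{V}_k$ is typically dense, and an SPCA loading matrix generally does not satisfy $\mathbf{X}\mathbf{B}=\mathbf{Z}_k$.
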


Corollary~\ref{cor:euclid} shows that BSPCoA reduces to the familiar high-dimensional SPCA regime when the dissimilarity is generated by ordinary Euclidean geometry.

\section{Simulation study}\label{sec:sim}
In this section, we evaluate the performance of the proposed method in comparison with traditional approaches.
Section~\ref{subsec:euclid_sim} compares SPCA with BSPCA, which can be viewed as a reduced version of BSPCoA.
Section~\ref{subsec:bray} generates synthetic microbiome count data from a Dirichlet--multinomial model designed
to mimic the variability commonly observed in sequencing experiments, in order to compare PCoA and BSPCoA.
Finally, Section~\ref{subsec:large_n} extends the previous setting to illustrate the computational advantage of BSPCoA over PCoA in large-sample scenarios. The code can be downloaded from the following link:
\href{https://github.com/picowang1203/Bayesian-PCoA}{https://github.com/picowang1203/Bayesian-PCoA}.

\subsection{Euclidean benchmark}\label{subsec:euclid_sim}

We first consider a Euclidean benchmark similar to the simulation settings in \citet{guan2009} and \citet{zou2006sparse}. Under Euclidean distance, BSPCoA reduces to Bayesian SPCA, so this experiment serves as a sanity check against the SPCA literature. Additional Euclidean loading tables are reported in Appendix~\ref{app:euclid} (Tables~\ref{tab:euclid_small}--\ref{tab:euclid_large}). Suppose that $n$ observations of $p$ variables are collected. The first $p_1$ variables are associated with a latent factor $V_1$, the next $p_2-p_1$ variables are associated with another latent factor $V_2$, and the remaining variables contain only noise. The latent factors are independent and generated as $V_1 \sim N(0,10)$ and $V_2 \sim N(0,20)$. Given $(V_1,V_2)$, the observed variables are generated according to
\begin{align*}
X_i &= V_1 + \varepsilon_i, \qquad && i=1,\ldots,p_1,\\
X_i &= V_2 + \varepsilon_i, \qquad && i=p_1+1,\ldots,p_2,\\
X_i &= \varepsilon_i, \qquad && i=p_2+1,\ldots,p,
\end{align*}
where $\varepsilon_i\overset{\mathrm{iid}}{\sim}N(0,1)$. Across both the low-dimensional and high-dimensional settings, BSPCoA recovers the same sparse support as frequentist SPCA up to the usual componentwise sign indeterminacy, while preserving nearly the same explained variation; see Appendix~\ref{app:euclid}.

\subsection{Simulation model for microbiome count data}\label{subsec:bray}

We generate synthetic microbiome count data from a Dirichlet--multinomial model designed to mimic the variability commonly observed in sequencing experiments. Consider two groups of samples, denoted by Group A and Group B, containing $n_A$ and $n_B$ samples, respectively. Each sample consists of counts for $p$ operational taxonomic units (OTUs). To reflect variability in sequencing depth, the total read count (library size) of each sample is generated from a Poisson distribution,
\[
L_i^{(A)} \sim \mathrm{Poisson}(8000), \qquad i=1,\dots,n_A
\quad\text{and}\quad
L_i^{(B)} \sim \mathrm{Poisson}(8000), \qquad i=1,\dots,n_B.
\]
For each sample, the underlying $p \times 1$ OTU composition vector is drawn from a Dirichlet distribution. Specifically,
\[
\mathbf p_i^{(A)}
\sim
\mathrm{Dirichlet}(\boldsymbol{\alpha}_A),
\qquad i=1,\dots,n_A
\quad\text{and}\quad
\mathbf p_i^{(B)}
\sim
\mathrm{Dirichlet}(\boldsymbol{\alpha}_B),
\qquad i=1,\dots,n_B,
\]
where
\[
\boldsymbol{\alpha}_A=
(6,6,6,6,6,2,2,2,2,2,\alpha,\dots,\alpha)\quad\text{and}\quad
\boldsymbol{\alpha}_B
=
(2,2,2,2,2,6,6,6,6,6,\alpha,\dots,\alpha),
\]
denote the Dirichlet concentration parameters for the two groups. The parameter $\alpha$ controls the abundance of the remaining OTUs and therefore determines the overall sparsity level. Conditional on the library size and the OTU composition, the observed $p\times1$ count vector for sample $i$ is generated from a multinomial distribution,
\begin{align*}
\mathbf X_i^{(A)}
\sim
\mathrm{Multinomial}\!\left(L_i^{(A)},\mathbf p_i^{(A)}\right)\mbox{~~and~~}\mathbf X_i^{(B)}
\sim
\mathrm{Multinomial}\!\left(L_i^{(B)},\mathbf p_i^{(B)}\right)
\end{align*}
and the full data matrix is given by $\mathbf X=
(
\mathbf X_1^{(A)},\dots,
\mathbf X_{n_A}^{(A)},
\mathbf X_1^{(B)},\dots,
\mathbf X_{n_B}^{(B)}
)^\top $.

We consider two settings, $\alpha=0.5$ and $\alpha=0.1$, corresponding to relatively dense and sparse OTU compositions, respectively. Pairwise dissimilarities are computed using the Bray--Curtis distance
\[
d_{ii'}=
\frac{\sum_{k=1}^p|x_{ik}-x_{i'k}|}
{\sum_{k=1}^p(x_{ik}+x_{i'k})}.
\]
We then compare classical PCoA and BSPCoA based on the first two fitted coordinates. For BSPCoA, we report $\delta_{\mathrm{res}}=\delta(\widehat{\mathbf{B}})$ and $\ExI=\ExI(\widehat{\mathbf{B}})$; unlike $\delta_\star$, the residual quantity $\delta_{\mathrm{res}}$ can exceed one because it is evaluated at the fitted sparse surrogate rather than at the unrestricted best linear approximation. In addition, we report the proportion of positive-eigenvalue mass explained by the first two classical PCoA coordinates, the average silhouette width computed from the two-dimensional embedding, and the best-matched clustering accuracy,
\[
\mathrm{BM\text{-}ACC}=
\max_{\pi\in\mathcal P}
\frac{1}{n}\sum_{i=1}^n\mathbf{1}\{y_i=\pi(\widehat c_i)\},
\]
where $y_i$ denotes the true group label, $\widehat c_i$ is the cluster assignment obtained from $k$-means with $k=2$, and $\mathcal P$ denotes the set of permutations of the cluster labels.

Table~\ref{tab:alpha1} summarizes the baseline scenario without background perturbation among the nondifferential taxa. In this setting, BSPCoA preserves the dominant ordination structure while providing an interpretable sparse surrogate. The clustering performance is favorable, as indicated by larger silhouette widths and comparable BM--ACC values relative to classical PCoA. Table~\ref{tab:alpha2} reports a more challenging scenario in which weak background perturbations are introduced to the nondifferential taxa, increasing background heterogeneity without altering the primary group signal. Even under this setting, BSPCoA continues to produce reliable clustering results. Although BSPCoA explains a smaller proportion of positive-eigenvalue mass, it still achieves strong clustering accuracy. This suggests that the Bayesian sparse formulation can recover the underlying group structure while relying on a parsimonious linear explanation of the ordination geometry.

\subsection{Large-sample implementation}\label{subsec:large_n}

The main computational bottleneck of classical PCoA is the formation and eigendecomposition of the $n\times n$ matrix $\mathbf{G}$. When $n$ is large, we estimate the surrogate on a representative subsample $I_m\subset\{1,\ldots,n\}$ of size $m\ll n$. Specifically, we compute $\mathbf{Z}_{k}^{(m)}$ and fit $\widehat{\mathbf{B}}^{(m)}$ using only the subsample $(\mathbf{X}_{I_m,\cdot},\mathbf{Z}_{k}^{(m)})$, after which all $n$ observations are embedded by the linear projection $\mathbf{X}\widehat{\mathbf{B}}^{(m)}$. This strategy preserves taxon-level interpretation while avoiding an eigendecomposition of the full $n\times n$ distance matrix. A runtime comparison for this subsample projection strategy is reported in Appendix~\ref{app:euclid}.

\section{Hadza gut microbiome dataset}\label{sec:app}

We applied BSPCoA to the Hadza gut microbiome study of \citet{smits2017seasonal}, which reported recurrent seasonal changes in community composition. We focus on the wet-versus-dry contrast and ask whether BSPCoA can recover an ordination comparable to classical PCoA while identifying a sparse set of taxa associated with dominant seasonal variation.

\subsection{Preprocessing and model fitting}\label{subsec:hadza_pre}

We restricted attention to adult participants sampled in 2014 to reduce demographic and temporal heterogeneity. Taxa were filtered by prevalence to reduce extreme sparsity, counts were converted to relative abundances, and the Hellinger distance
\begin{equation}
d_H(i,i')=
\left\{
\sum_{j=1}^p
\left(
\sqrt{p_{ij}}-\sqrt{p_{i'j}}
\right)^2
\right\}^{1/2}
\label{eq:hellinger}
\end{equation}
was computed from the transformed compositions. The Hellinger distance is attractive in community ecology because it corresponds to Euclidean distance after square-root transformation \citep{legendre2001}. We then fit BSPCoA with target dimension $k=2$, the horseshoe TPBN prior $u=a=1/2$, and $\tau=(pn\log n)^{-1}$. Posterior summaries were based on 1500 retained Gibbs draws after discarding the first 500 iterations as burn-in.

For the classical PCoA, the first two axes explain 13.2\% and 8.0\% of the total positive-eigenvalue mass, respectively. Figure~\ref{fig:hadza_ord} compares the classical ordination with the two-dimensional BSPCoA surrogate. In both displays, wet and dry samples overlap, but the BSPCoA embedding is somewhat more concentrated and slightly more aligned with the seasonal contrast. To quantify this alignment, we applied $k$-means with $k=2$ to the two-dimensional coordinates. The best-matched clustering accuracy increases from 0.559 for classical PCoA to 0.609 for BSPCoA, suggesting that shrinkage removes some extraneous variation while preserving the dominant distance-based pattern. Figure~\ref{fig:hadza_heatmap} displays a heatmap of the posterior median loading matrix $\widehat{\mathbf{B}}$. For display only, we rescale the loading matrix elementwise to the unit interval by
\[
\widetilde{\mathbf{B}}=
\frac{\widehat{\mathbf{B}}-\min(\widehat{\mathbf{B}})}
{\max(\widehat{\mathbf{B}})-\min(\widehat{\mathbf{B}})}.
\]
Unlike classical PCoA, BSPCoA yields taxon-level coefficients, so the heatmap identifies a relatively small subset of taxa with the strongest contributions to the first two sparse ordination directions.

\section{Discussion}\label{sec:disc}

We proposed Bayesian sparse principal coordinates analysis (BSPCoA), a post hoc framework that links distance-based ordination with interpretable taxon-level effects through a sparse linear surrogate. By approximating the leading PCoA coordinates using a row-sparse regression model, the proposed method preserves the dominant geometry of the classical ordination while providing direct interpretation in terms of the original taxa. The introduction of the $\delta$-tolerance diagnostic clarifies when a sparse linear surrogate can faithfully represent the distance-based embedding and offers a quantitative measure of approximation quality. The simulation studies demonstrate that BSPCoA maintains clustering performance comparable to classical PCoA while improving interpretability through sparse feature selection. In particular, the results show that reliable classification can be achieved even when the surrogate explains a relatively small proportion of the total eigenvalue mass. The Euclidean benchmark further confirms that the proposed framework reduces naturally to Bayesian sparse PCA when the dissimilarity corresponds to ordinary Euclidean geometry.

The analysis of the Hadza gut microbiome data illustrates that BSPCoA produces an ordination close to classical PCoA while highlighting a small set of taxa associated with seasonal variation. Nevertheless, several limitations deserve attention. The interpretability of the surrogate depends on the chosen preprocessing and dissimilarity, and when the optimal $\delta$ is large, the linear surrogate may provide only a weak explanation of the distance-based geometry. Future work may consider structured shrinkage informed by phylogenetic relationships, alternative covariance structures across ordination axes, and extensions to longitudinal or multi-view microbiome studies.

\section*{Acknowledgements}
The authors thank the editor, associate editor, and reviewers for their constructive comments. This work was partially supported by the National Science Foundation under grants DMS-1924792 and DMS-2318925.

\section*{Data availability}
The Hadza microbiome data analyzed in this article are publicly available from the study of \citet{smits2017seasonal} and the repository referenced therein. Code and processed data sufficient to reproduce the analyses should be deposited in \href{https://github.com/picowang1203/Bayesian-PCoA}{https://github.com/picowang1203/Bayesian-PCoA}.

\clearpage

\begin{table}[p]
\centering
\caption{Simulation results comparing PCoA and BSPCoA under the baseline Dirichlet--multinomial setting with $\alpha=0.5$.}
\label{tab:alpha1}
\begin{tabular}{llrrrrrr}
\toprule
Method &  & var.PC1 (\%) & var.PC2 (\%) & Silhouette$_{2D}$ & $\delta_{\mathrm{res}}$ & BM-ACC & ExI \\
\midrule
PCoA   & mean & 37.40 & 2.22 & 0.53 & NA      & 0.99 & NA \\
       & sd   & 0.805 & 0.063 & 0.005 & NA      & 0.004 & NA \\
\midrule
BSPCoA & mean & 16.72 & 10.60 & 0.61 & 1.98 & 0.98 & 0.401 \\
       & sd   & 0.716 & 0.069 & 0.148 & 0.556 & 0.061 & 0.058 \\
\bottomrule
\end{tabular}
\end{table}

\begin{table}[p]
\centering
\caption{Simulation results comparing PCoA and BSPCoA under the sparse Dirichlet--multinomial setting with $\alpha=0.1$.}
\label{tab:alpha2}
\begin{tabular}{llrrrrrr}
\toprule
Method &  & var.PC1 (\%) & var.PC2 (\%) & Silhouette$_{2D}$ & $\delta_{\mathrm{res}}$ & BM-ACC & ExI \\
\midrule
PCoA   & mean & 65.4 & 2.02 & 0.56 & NA      & 1.00 & NA \\
       & sd   & 1.011 & 0.120 & 0.010 & NA      & 0.000 & NA \\
\midrule
BSPCoA & mean & 33.3 & 12.1 & 0.75 & 1.32 & 1.00 & 0.411 \\
       & sd   & 0.852 & 0.081 & 0.143 & 0.264 & 0.000 & 0.0681 \\
\bottomrule
\end{tabular}
\end{table}

\begin{figure}[p]
\centering
\safeincludegraphics[width=\textwidth]{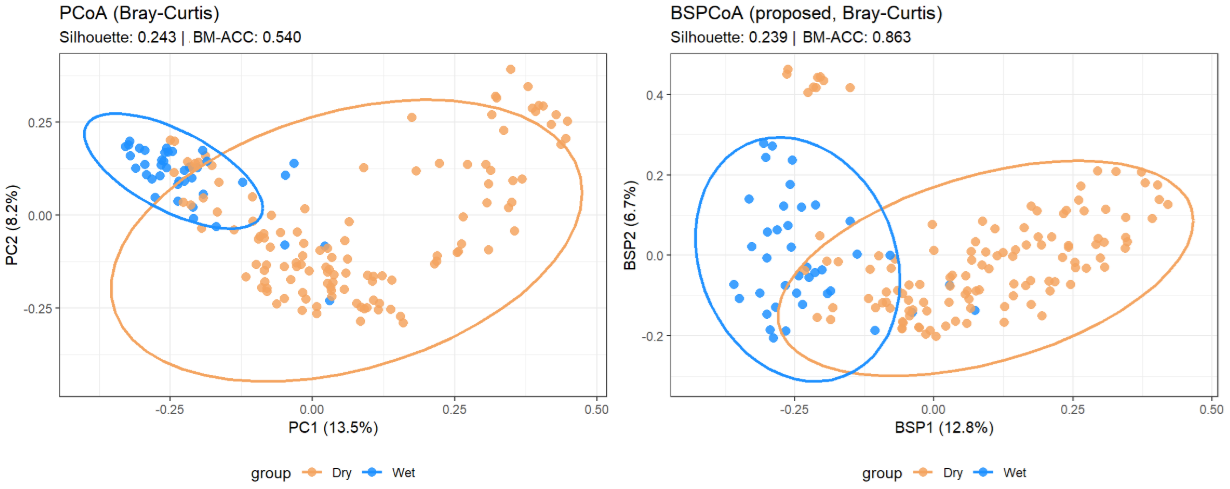}
\caption{Classical PCoA and BSPCoA for the Hadza gut microbiome data under the Hellinger distance. The displayed points are the first two coordinates from each method, with wet and dry season samples indicated by different plotting symbols.}
\label{fig:hadza_ord}
\end{figure}

\begin{figure}[p]
\centering
\safeincludegraphics[width=.8\textwidth]{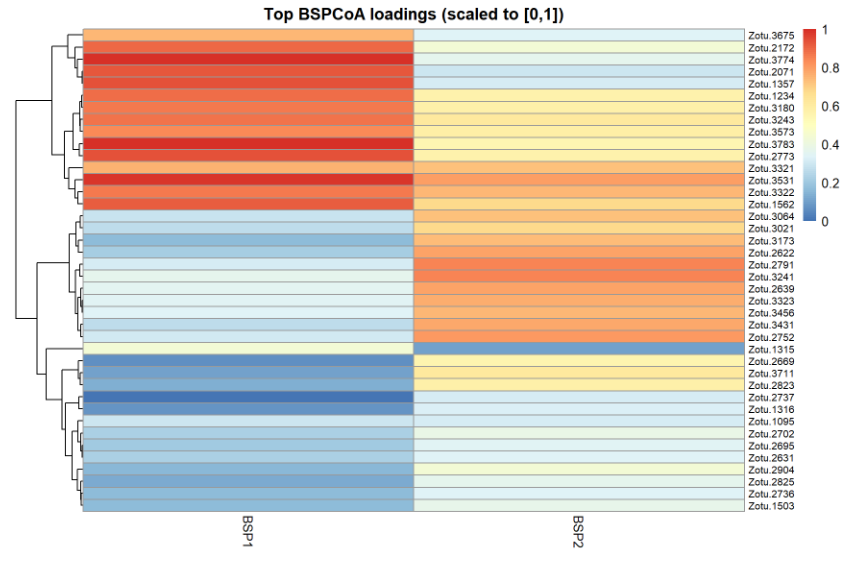}
\caption{Heatmap of the posterior median BSPCoA loading matrix for the Hadza data. Larger magnitudes indicate taxa with stronger contributions to the sparse ordination directions.}
\label{fig:hadza_heatmap}
\end{figure}

\appendix

\section{Technical assumptions and proofs}\label{app:theory}
\resetappendixcounters

We first record the assumptions used in the posterior concentration result. These conditions are adapted from the multivariate regression theory of \citet{bai2018high} and specialized to the pseudo-response $\mathbf{Y}=\mathbf{Z}_k$.

\paragraph{Assumption A.}
\begin{itemize}
\item[(A1)] Either $p=p_n>n$ for all $n$ with $\log p_n = O(n^\alpha)$ for some $\alpha\in(0,1)$, or $p_n=o(n)$ with $p_n\le n$ for all $n$.
\item[(A2)] There exists a constant $c_1>0$ such that, for every subset $S\subset\{1,\ldots,p_n\}$ with $|S|\le n$,
\[
 \frac{1}{c_1} \le\liminf_n \lambda_{\min}\!\left(\frac{\mathbf{X}_S^\top \mathbf{X}_S}{n}\right)\mbox{~~and~~}
\limsup_n \lambda_{\max}\!\left(\frac{\mathbf{X}^\top \mathbf{X}}{n}\right)\le c_1
\]
and the active set size satisfies $s_0=o(n/\log p)$.
\end{itemize}

\paragraph{Assumption B.}
\begin{itemize}
\item[(B1)] The slowly varying function $L(t)$ associated with the TPBN prior tail is positive and bounded away from zero for large $t$.
\item[(B2)] All entries of every $\mathbf{B}_0\in \mathcal{B}_0^\delta$ are uniformly bounded.
\item[(B3)] The global shrinkage parameter satisfies $\tau_n\in(0,1)$ for all $n$ and $\tau_n=o(p^{-1}n^{-\rho})$ for some $\rho>0$.
\item[(B4)] The leading PCoA coordinate matrix obeys $\|\mathbf{Z}_k\|_F^2=O(n)$.
\end{itemize}

\begin{proof}[Proof of Lemma~\ref{lem:bestlin}.]
For every $\mathbf{B}$, the matrix $\mathbf{X}\mathbf{B}$ belongs to $\mathrm{col}(\mathbf{X})$. Decompose
\[
\mathbf{Z}_k-\mathbf{X}\mathbf{B}
=
(\mathbf{I}_n-\mathbf{P}_X)\mathbf{Z}_k
+
\{\mathbf{P}_X\mathbf{Z}_k-\mathbf{X}\mathbf{B}\}.
\]
The first term lies in $\mathrm{col}(\mathbf{X})^\perp$ and the second lies in $\mathrm{col}(\mathbf{X})$, so the two terms are orthogonal. Hence
\[
\|\mathbf{Z}_k-\mathbf{X}\mathbf{B}\|_F^2
=
\|(\mathbf{I}_n-\mathbf{P}_X)\mathbf{Z}_k\|_F^2
+
\|\mathbf{P}_X\mathbf{Z}_k-\mathbf{X}\mathbf{B}\|_F^2.
\]
The minimum is achieved if and only if $\mathbf{X}\mathbf{B}=\mathbf{P}_X\mathbf{Z}_k$, and $\mathbf{B}_\star=\mathbf{X}^+\mathbf{Z}_k$ is one such minimizer. Dividing by $\|\mathbf{Z}_k\|_F^2$ yields the expression for $\delta_\star^2$. Finally,
\[
\tr\{(\mathbf{X}\mathbf{B}_\star)^\top\mathbf{Z}_k\}
=
\tr(\mathbf{Z}_k^\top \mathbf{P}_X \mathbf{Z}_k)
=
\|\mathbf{P}_X\mathbf{Z}_k\|_F^2,
\]
which gives $\mathrm{ExI}(\mathbf{B}_\star)=\|\mathbf{P}_X\mathbf{Z}_k\|_F/\|\mathbf{Z}_k\|_F$ and therefore $\mathrm{ExI}(\mathbf{B}_\star)^2=1-\delta_\star^2$.
\end{proof}

\begin{proof}[Proof of Theorem~\ref{thm:posterior}.]
Fix $\mathbf{B}_0\in\mathcal{B}_0^\delta$ and define the deterministic approximation error
\[
\mathbf{R}_0=\mathbf{Z}_k-\mathbf{X}\mathbf{B}_0,
\qquad
\|\mathbf{R}_0\|_F\le \delta\|\mathbf{Z}_k\|_F=O(\delta\sqrt{n})
\]
by Assumption~(B4). Under the working model, the pseudo-response can be written as
\[
\mathbf{Y}
=
\mathbf{Z}_k+\mathbf{E}
=
\mathbf{X}\mathbf{B}_0+\mathbf{R}_0+\mathbf{E},
\qquad
\mathbf{E}\sim \MN_{n\times k}(\mathbf{0},\mathbf{I}_n,\mathbf{I}_k).
\]
Thus the usual exact multivariate regression model is perturbed by the deterministic bias $\mathbf{R}_0$.

The posterior-contraction proof of \citet{bai2018high} is based on exponentially consistent tests and lower bounds on prior mass for shrinking neighborhoods of $\mathbf{B}_0$. The prior-mass part is unchanged here because it depends only on the TPBN prior and the Euclidean neighborhood of $\mathbf{B}_0$, not on the pseudo-response construction. Hence it remains to show that the same tests are stable under the additional bias $\mathbf{R}_0$.

Let $S\subset\{1,\ldots,p\}$ be any candidate sparse model with $|S|\le n$, and let $\mathbf{X}_S$ denote the corresponding design submatrix. By Assumption~(A2),
\begin{equation}
\left\|
(\mathbf{X}_S^\top \mathbf{X}_S)^{-1}\mathbf{X}_S^\top
\right\|_{\mathrm{op}}
\le
C_1 n^{-1/2}
\label{eq:projbound}
\end{equation}
uniformly over such $S$. Therefore
\begin{equation}
\left\|
(\mathbf{X}_S^\top \mathbf{X}_S)^{-1}\mathbf{X}_S^\top \mathbf{R}_0
\right\|_F
\le
\left\|
(\mathbf{X}_S^\top \mathbf{X}_S)^{-1}\mathbf{X}_S^\top
\right\|_{\mathrm{op}}
\|\mathbf{R}_0\|_F
\le
C_2 \delta
\label{eq:biasbound}
\end{equation}
for some constant $C_2>0$. Equation~\eqref{eq:biasbound} is the only place where the pseudo-response setting differs from the exact regression model.

Let $\Phi_n$ be the exponentially consistent test used in the proof of Theorem~2 of \citet{bai2018high} for the exact model $\mathbf{Y}^\dagger=\mathbf{X}\mathbf{B}_0+\mathbf{E}$. Replacing $\mathbf{Y}^\dagger$ by $\mathbf{Y}=\mathbf{Y}^\dagger+\mathbf{R}_0$ shifts the relevant least-squares contrasts by at most $C_2\delta$ uniformly over the sparse models considered in that proof. Consequently, the same type-I and type-II bounds continue to hold after replacing $\varepsilon$ by $\varepsilon-C_3\delta$ for a constant $C_3>0$. In particular, if $\varepsilon>c\delta$ with $c>C_3$, then the testing error probabilities remain bounded by $\exp(-Cn\varepsilon^2)$ for some $C>0$.

The final posterior bound follows from the same decomposition as in \citet{bai2018high}:
\[
\Pi(U_\varepsilon^c\mid \mathbf{Y},\mathbf{X})
\le
\Phi_n
+
(1-\Phi_n)
\frac{\int_{U_\varepsilon^c} p_{\mathbf{B}}(\mathbf{Y}\mid \mathbf{X})\pi(\mathbf{B})\,d\mathbf{B}}
{\int p_{\mathbf{B}}(\mathbf{Y}\mid \mathbf{X})\pi(\mathbf{B})\,d\mathbf{B}},
\]
where $U_\varepsilon=\{\mathbf{B}:\|\mathbf{B}-\mathbf{B}_0\|_F\le \varepsilon\}$. The numerator is controlled by the exponentially consistent tests, and the denominator is controlled by the TPBN prior mass around $\mathbf{B}_0$. This yields the stated exponential contraction bound.
\end{proof}

\begin{corollary}[Restatement of Corollary~\ref{cor:euclid} in the main text]
Under Euclidean distance with centered $\mathbf{X}$, suppose $\log p=o(n)$ and Assumptions~A and B hold. Then, for every $\varepsilon>0$,
\[
\sup_{\mathbf{B}_0}
\E_{\mathbf{B}_0}
\Pi\!\left(
\mathbf{B}:
\|\mathbf{B}-\mathbf{B}_0\|_F
>
\varepsilon\left(\frac{\log p}{n}\right)^{1/2}
\;\middle|\;
\mathbf{Y}=\mathbf{Z}_k,\mathbf{X}
\right)
\longrightarrow 0
\qquad\text{as}\qquad \min\{n,p\}\to\infty.
\]
\end{corollary}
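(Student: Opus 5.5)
The plan is to reduce the corollary to the exact-model posterior contraction theorem of \citet{bai2018high}, using the fact that Euclidean PCoA has no surrogate bias. First I will show that $\mathcal{B}_0^0$ is the set of exact linear representations. Under Euclidean distance with centered $\mathbf{X}$, $\mathbf{G}=\mathbf{X}\mathbf{X}^\top$, so $\mathbf{Z}_k=\mathbf{U}_k\mathbf{D}_k=\mathbf{X}\mathbf{V}_k\in\col(\mathbf{X})$. By Lemma~\ref{lem:bestlin}, $\delta_\star=0$, and $\mathcal{B}_0^0=\{\mathbf{B}_0:\mathbf{X}\mathbf{B}_0=\mathbf{Z}_k\}$ is nonempty; it contains $\mathbf{V}_k$ and $\mathbf{X}^+\mathbf{Z}_k$. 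For every $\mathbf{B}_0\in\mathcal{B}_0^0$ the bias $\mathbf{R}_0=\mathbf{Z}_k-\mathbf{X}\mathbf{B}_0$ vanishes identically. The working model then becomes exactly $\mathbf{Y}=\mathbf{X}\mathbf{B}_0+\mathbf{E}$ with $\mathbf{E}\sim\MN_{n\times k}(\mathbf{0},\mathbf{I}_n,\mathbf{I}_k)$, which is the multivariate regression model treated by \citet{bai2018high}.

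Next I will check that Assumptions A and B imply the hypotheses of Theorem~2 of \citet{bai2018high}. (A1) and $\log p=o(n)$ give the dimension regime. (A2) supplies the sparse eigenvalue bounds and $s_0=o(n/\log p)$. (B1)--(B3) are the TPBN tail, bounded-signal and global-scale conditions. That theorem then yields
\[
\sup_{\mathbf{B}_0}\E_{\mathbf{B}_0}\Pi\!\left(\mathbf{B}:\|\mathbf{B}-\mathbf{B}_0\|_F>\varepsilon\left(s_0\log p/n\right)^{1/2}\;\middle|\;\mathbf{Y},\mathbf{X}\right)\to0 .
\]
The sparsity enters only through $s_0$. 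Since the rate stated in the corollary omits $s_0$, I would treat $s_0$ as bounded, absorbing it into $\varepsilon$, or read the stated rate as holding up to the $s_0$ factor.

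Equivalently, the argument can be phrased through the proof of Theorem~\ref{thm:posterior} with $\delta=0$. The shift bound \eqref{eq:biasbound} becomes $C_2\cdot0=0$, so the tests $\Phi_n$ keep their original type-I and type-II errors at every radius $\varepsilon_n=\varepsilon(\log p/n)^{1/2}$. No constraint $\varepsilon>c\delta$ remains. The radius can therefore shrink, provided the prior-mass bound at that radius holds. The TPBN prior must put mass at least $\exp(-Cn\varepsilon_n^2)=p^{-C\varepsilon^2}$ on the Frobenius ball $U_{\varepsilon_n}$ around $\mathbf{B}_0$. This follows from (B2) and (B3) via the row-wise small-ball computation of \citet{bai2018mbsp}.

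The main obstacle is uniformity over $\mathcal{B}_0^0$ when $p>n$. In that regime $\mathcal{B}_0^0=\mathbf{X}^+\mathbf{Z}_k+\ker(\mathbf{X})$ is an affine subspace. The posterior cannot concentrate simultaneously around all of its elements, because two of them may be far apart in Frobenius norm. I will restrict the supremum to the $s_0$-row-sparse elements of $\mathcal{B}_0^0$ that obey (B2). By (A2), any two such matrices $\mathbf{B}_0,\mathbf{B}_0'$ have difference supported on at most $2s_0\le n$ rows, so $\|\mathbf{X}(\mathbf{B}_0-\mathbf{B}_0')\|_F^2\ge n\|\mathbf{B}_0-\mathbf{B}_0'\|_F^2/c_1$. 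Since both satisfy $\mathbf{X}\mathbf{B}_0=\mathbf{Z}_k$, this forces $\mathbf{B}_0=\mathbf{B}_0'$, so the sparse representative is unique. Because the bounds of \citet{bai2018high} depend on $\mathbf{B}_0$ only through $s_0$ and the entrywise bound, the supremum then passes through and completes the argument.
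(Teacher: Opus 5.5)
Your proposal is correct and is essentially the paper's own argument. Euclidean distance with centered $\mathbf{X}$ gives $\mathbf{Z}_k=\mathbf{X}\mathbf{V}_k$, so $\delta=0$ and $\mathbf{R}_0=\mathbf{0}$; the pseudo-response model is then exactly the multivariate regression of \citet{bai2018high}, and the rate is read off from their result. Two of your additions are legitimate refinements that the paper's two-line proof passes over. First, when $p>n$ the set $\mathcal{B}_0^0$ is an unbounded affine set, and you restrict the supremum to its row-sparse representative, which is unique by (A2). Second, you flag the missing $s_0$ factor in the stated rate. The one claim to drop is in your optional ``equivalently'' paragraph. Since $\tau_n\le p^{-1}$, each row of $\mathbf{B}_0$ of norm bounded away from zero costs a prior factor of order at most $\tau_n^{a}$, so the TPBN mass of $U_{\varepsilon_n}$ is $O(p^{-as_0})$. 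That mass cannot reach $p^{-C\varepsilon^2}$ for small $\varepsilon$, so the prior-mass route only gives radii $M(s_0\log p/n)^{1/2}$ with $M$ large, not every $\varepsilon>0$.
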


\begin{proof}[Proof of Corollary~\ref{cor:euclid}.]
When the dissimilarity is Euclidean and the columns of $\mathbf{X}$ are centered, $\mathbf{G}=\mathbf{X}\mathbf{X}^\top$ and the leading PCoA coordinates satisfy $\mathbf{Z}_k=\mathbf{X}\mathbf{V}_k$, where $\mathbf{V}_k$ contains the first $k$ PCA loading vectors. Hence $\delta=0$, $\mathbf{R}_0=\mathbf{0}$, and the pseudo-response regression is exactly the multivariate regression model studied by \citet{bai2018high}. The stated $(\log p/n)^{1/2}$ rate therefore follows directly from their Euclidean-case result.
\end{proof}

\bigskip
The proof above is intentionally concise. It clarifies how the $\delta$-surrogate enters the argument and isolates the only new ingredient relative to the exact multivariate regression theory of \citet{bai2018high}, namely the deterministic bias bound in \eqref{eq:biasbound}.

\section{Additional Euclidean-case experiments}\label{app:euclid}
\resetappendixcounters

This appendix reports additional experiments in Euclidean settings. They serve two purposes: first, to verify that BSPCoA reduces to Bayesian sparse PCA when the dissimilarity is Euclidean; and second, to retain the large-scale runtime comparison while keeping the main paper focused on distance-based microbiome ordination. Because principal components are only defined up to sign, componentwise sign reversals in the loading tables are immaterial.

\subsection*{B.1. Euclidean latent-factor simulations}

We generated data from a two-factor model similar to the settings in \citet{guan2009} and \citet{zou2006sparse}. Suppose $n$ observations on $p$ variables are available. The first $p_1$ variables are associated with a latent factor $V_1$, the next $p_2-p_1$ variables are associated with another latent factor $V_2$, and the remaining variables are pure noise. Specifically,
\[
X_j=
\begin{cases}
V_1+\varepsilon_j, & j=1,\ldots,p_1,\\
V_2+\varepsilon_j, & j=p_1+1,\ldots,p_2,\\
\varepsilon_j, & j=p_2+1,\ldots,p,
\end{cases}
\qquad
\varepsilon_j\overset{\mathrm{iid}}{\sim} N(0,1).
\]
In the settings below, $V_1\sim N(0,10)$, $V_2\sim N(0,20)$, $n=50$, and the remaining design parameters are indicated in the table captions.

\begin{table}[H]
\centering
\caption{Euclidean latent-factor simulation with $n=50$, $p=10$, $p_1=1$, and $p_2=4$. The Bayesian method recovers the same sparse support as SPCA up to componentwise sign changes.}
\label{tab:euclid_small}
\small
\begin{tabular}{lccccccc}
\toprule
 & \multicolumn{3}{c}{PCA} & \multicolumn{2}{c}{SPCA} & \multicolumn{2}{c}{Bayesian SPCA} \\
\cmidrule(lr){2-4}\cmidrule(lr){5-6}\cmidrule(lr){7-8}
Variable & PC1 & PC2 & PC3 & PC1 & PC2 & PC1 & PC2 \\
\midrule
$X_1$ & 0.294 & 0.914 & 0.026 & 0.000 & 1.000 & 0.000 & 1.000 \\
$X_2$ & 0.531 & -0.168 & -0.253 & 0.578 & 0.000 & -0.578 & 0.000 \\
$X_3$ & 0.529 & 0.168 & 0.221 & 0.574 & 0.000 & -0.573 & 0.000 \\
$X_4$ & 0.527 & -0.177 & -0.221 & 0.578 & 0.000 & -0.578 & 0.000 \\
$X_5$ & -0.023 & 0.029 & 0.297 & 0.000 & 0.000 & 0.000 & 0.000 \\
$X_6$ & -0.023 & -0.028 & -0.276 & 0.000 & 0.000 & 0.000 & 0.000 \\
$X_7$ & -0.019 & 0.027 & 0.292 & 0.000 & 0.000 & 0.000 & 0.000 \\
$X_8$ & 0.021 & -0.028 & -0.287 & 0.000 & 0.000 & 0.000 & 0.000 \\
$X_9$ & 0.023 & 0.029 & 0.290 & 0.000 & 0.000 & 0.000 & 0.000 \\
$X_{10}$ & 0.019 & 0.029 & 0.314 & 0.000 & 0.000 & 0.000 & 0.000 \\
\midrule
AVar (\%) & 53.8 & 33.0 & 2.9 & 51.2 & 34.5 & 51.2 & 34.4 \\
\bottomrule
\end{tabular}
\end{table}

\begin{table}[H]
\centering
\caption{Euclidean latent-factor simulation with $n=50$, $p=100$, $p_1=3$, and $p_2=8$. Only the first 10 variables are shown.}
\label{tab:euclid_large}
\small
\begin{tabular}{lccccccc}
\toprule
 & \multicolumn{3}{c}{PCA} & \multicolumn{2}{c}{SPCA} & \multicolumn{2}{c}{Bayesian SPCA} \\
\cmidrule(lr){2-4}\cmidrule(lr){5-6}\cmidrule(lr){7-8}
Variable & PC1 & PC2 & PC3 & PC1 & PC2 & PC1 & PC2 \\
\midrule
$X_1$ & 0.0554 & -0.5554 & -0.0609 & 0.0000 & -0.5789 & 0.0000 & -0.5766 \\
$X_2$ & 0.0522 & -0.5512 & -0.0701 & 0.0000 & -0.5728 & 0.0000 & -0.5673 \\
$X_3$ & 0.0555 & -0.5559 & -0.0684 & 0.0000 & -0.5767 & 0.0000 & -0.5758 \\
$X_4$ & 0.4417 & -0.0431 & -0.0766 & 0.4493 & 0.0000 & 0.4510 & 0.0000 \\
$X_5$ & 0.4420 & -0.0464 & -0.0732 & 0.4462 & 0.0000 & 0.4484 & 0.0000 \\
$X_6$ & 0.4406 & 0.0437 & 0.0720 & 0.4479 & 0.0000 & 0.4479 & 0.0000 \\
$X_7$ & 0.4375 & -0.0431 & -0.0624 & 0.4433 & 0.0000 & 0.4368 & 0.0000 \\
$X_8$ & 0.4378 & -0.0453 & -0.0784 & 0.4466 & 0.0000 & 0.4392 & 0.0000 \\
$X_9$ & -0.0112 & -0.0206 & -0.0710 & 0.0000 & 0.0000 & 0.0000 & 0.0000 \\
$X_{10}$ & 0.0128 & -0.0209 & -0.0809 & 0.0000 & 0.0000 & 0.0000 & 0.0000 \\
\midrule
AVar (\%) & 45.38 & 14.20 & 2.34 & 44.11 & 13.42 & 43.83 & 13.35 \\
\bottomrule
\end{tabular}
\end{table}

\subsection*{B.2. Runtime comparison for the subsample projection strategy}

To illustrate the computational advantage of the subsample projection strategy, we compared the runtime of classical PCoA with BSPCoA as the sample size increased. For BSPCoA, the principal directions were estimated from a fixed subsample of size $m=100$, after which the remaining observations were embedded by linear projection. The results in Figure~\ref{fig:runtime_comparison} show that the runtime of classical PCoA increases rapidly with $n$, whereas the subsample-based BSPCoA procedure grows much more slowly. Figure~\ref{fig:runtime_clustering} shows the clustering results of classical PCoA and BSPCoA under different sample sizes. The results indicate that BSPCoA maintains comparable clustering performance while requiring substantially less computational time than classical PCoA.

\begin{figure}[H]
\centering
\safeincludegraphics[width=0.9\textwidth]{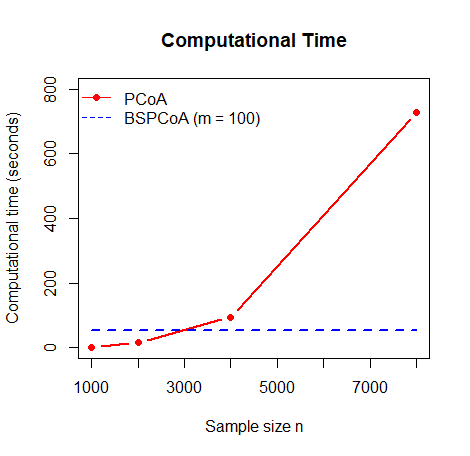}
\caption{Computational time comparison between classical PCoA and BSPCoA as the sample size increases. BSPCoA uses a fixed subsample of size $m=100$, so the distance-based estimation step remains nearly constant while classical PCoA becomes progressively more expensive.}
\label{fig:runtime_comparison}
\end{figure}

\begin{figure}[p]
\centering
\begin{subfigure}{0.48\textwidth}
\centering
\safeincludegraphics[width=\linewidth]{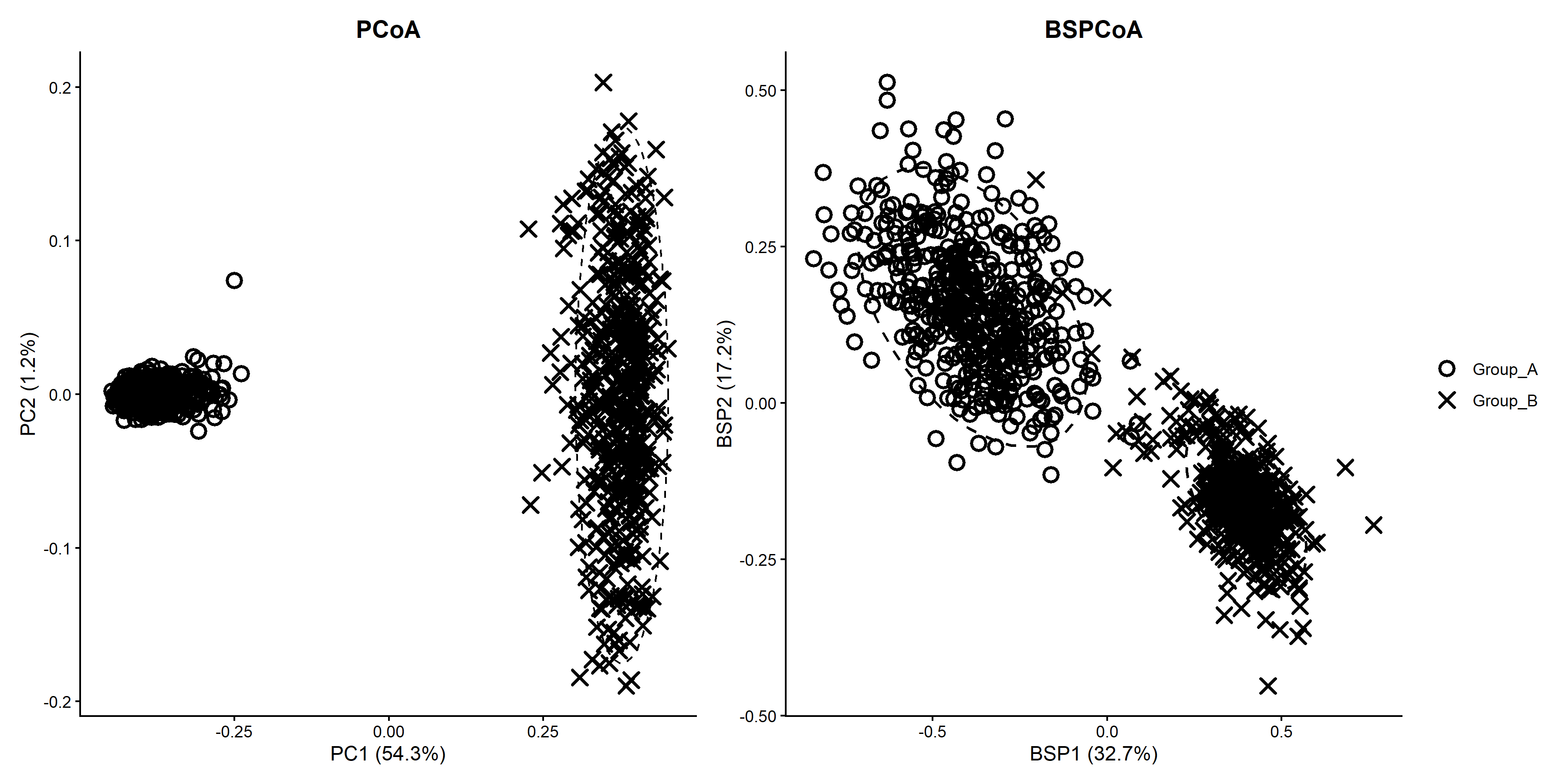}
\caption{$n=1000$}
\end{subfigure}\hfill
\begin{subfigure}{0.48\textwidth}
\centering
\safeincludegraphics[width=\linewidth]{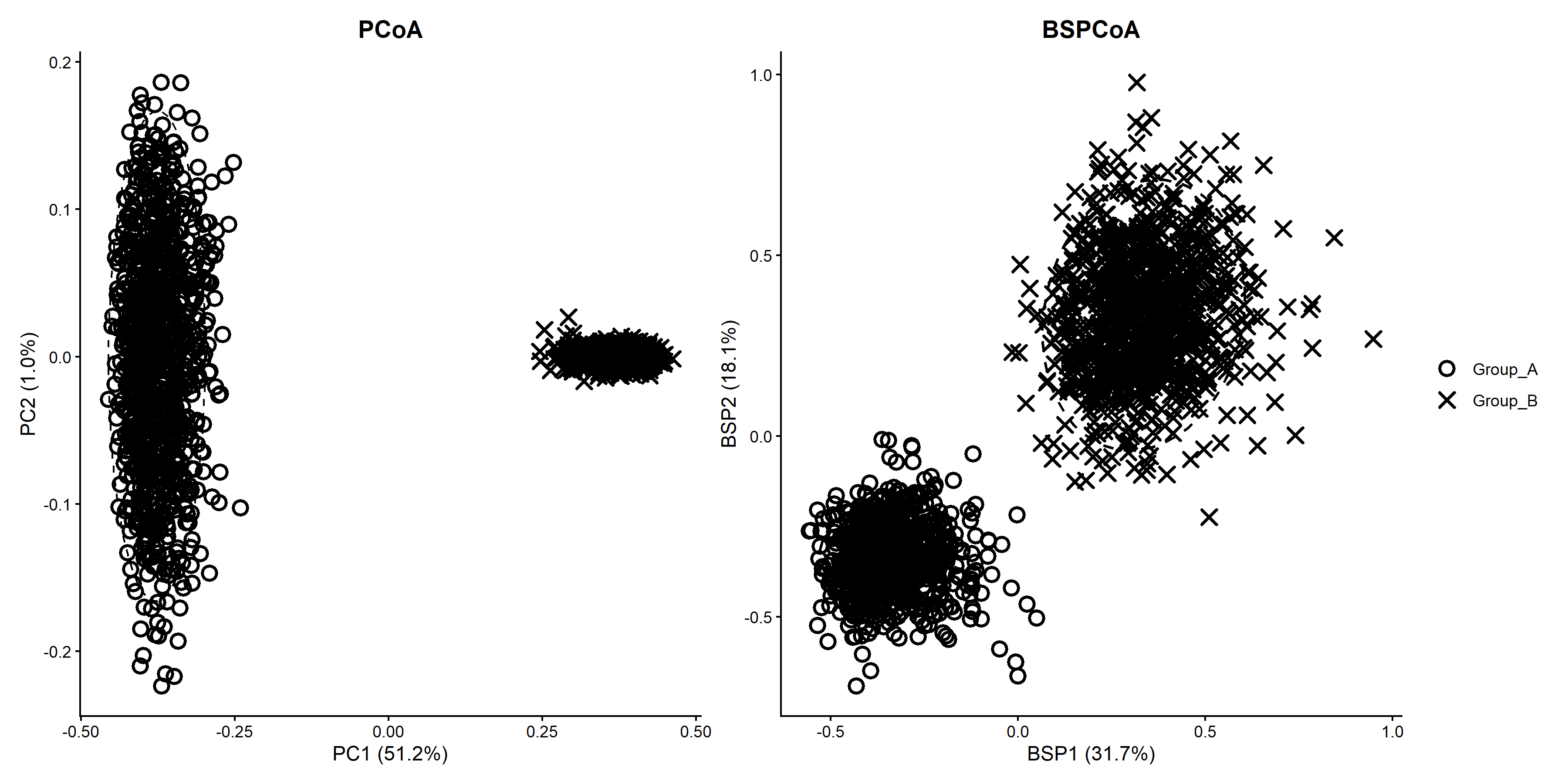}
\caption{$n=2000$}
\end{subfigure}

\medskip

\begin{subfigure}{0.48\textwidth}
\centering
\safeincludegraphics[width=\linewidth]{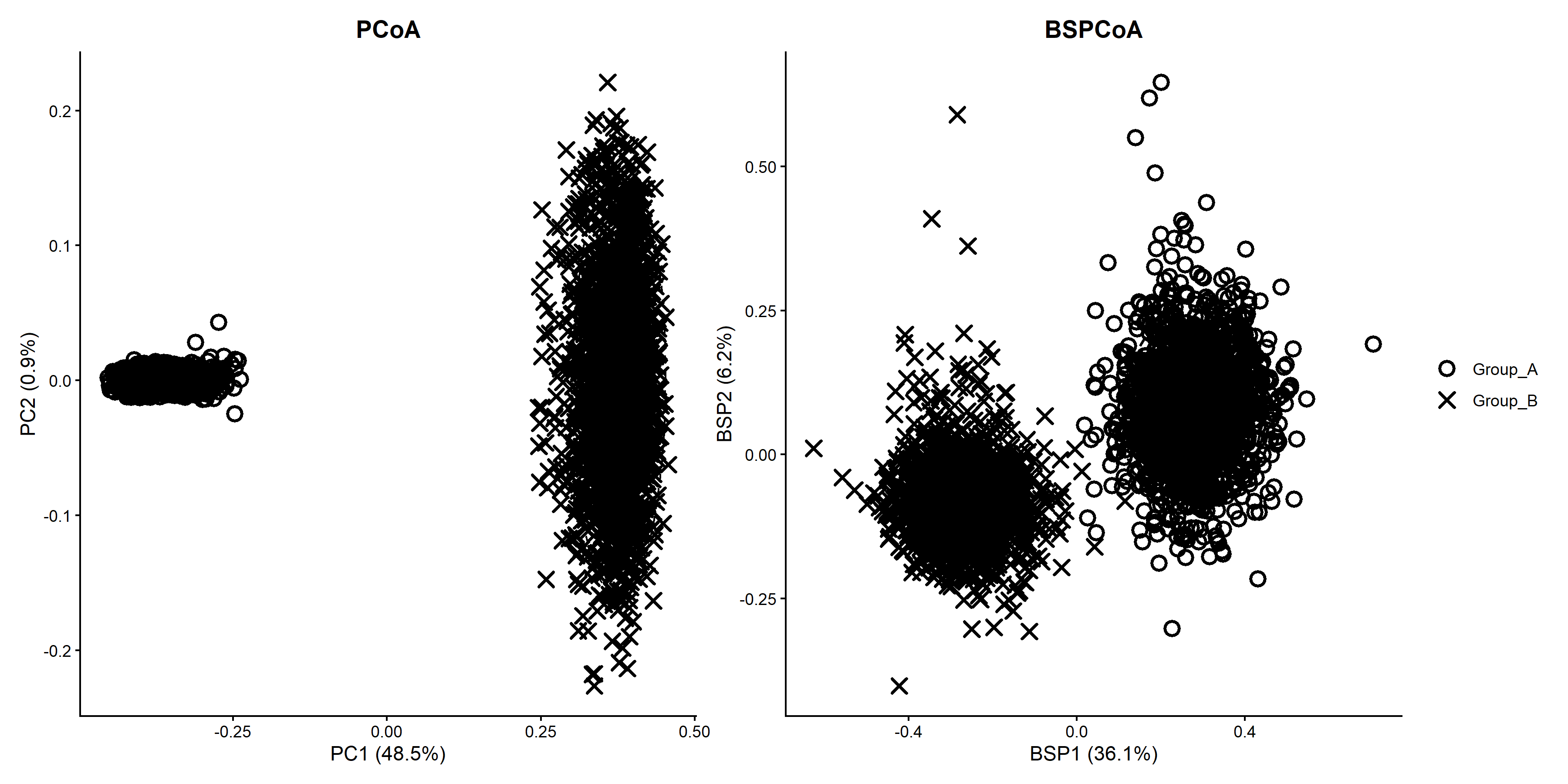}
\caption{$n=4000$}
\end{subfigure}\hfill
\begin{subfigure}{0.48\textwidth}
\centering
\safeincludegraphics[width=\linewidth]{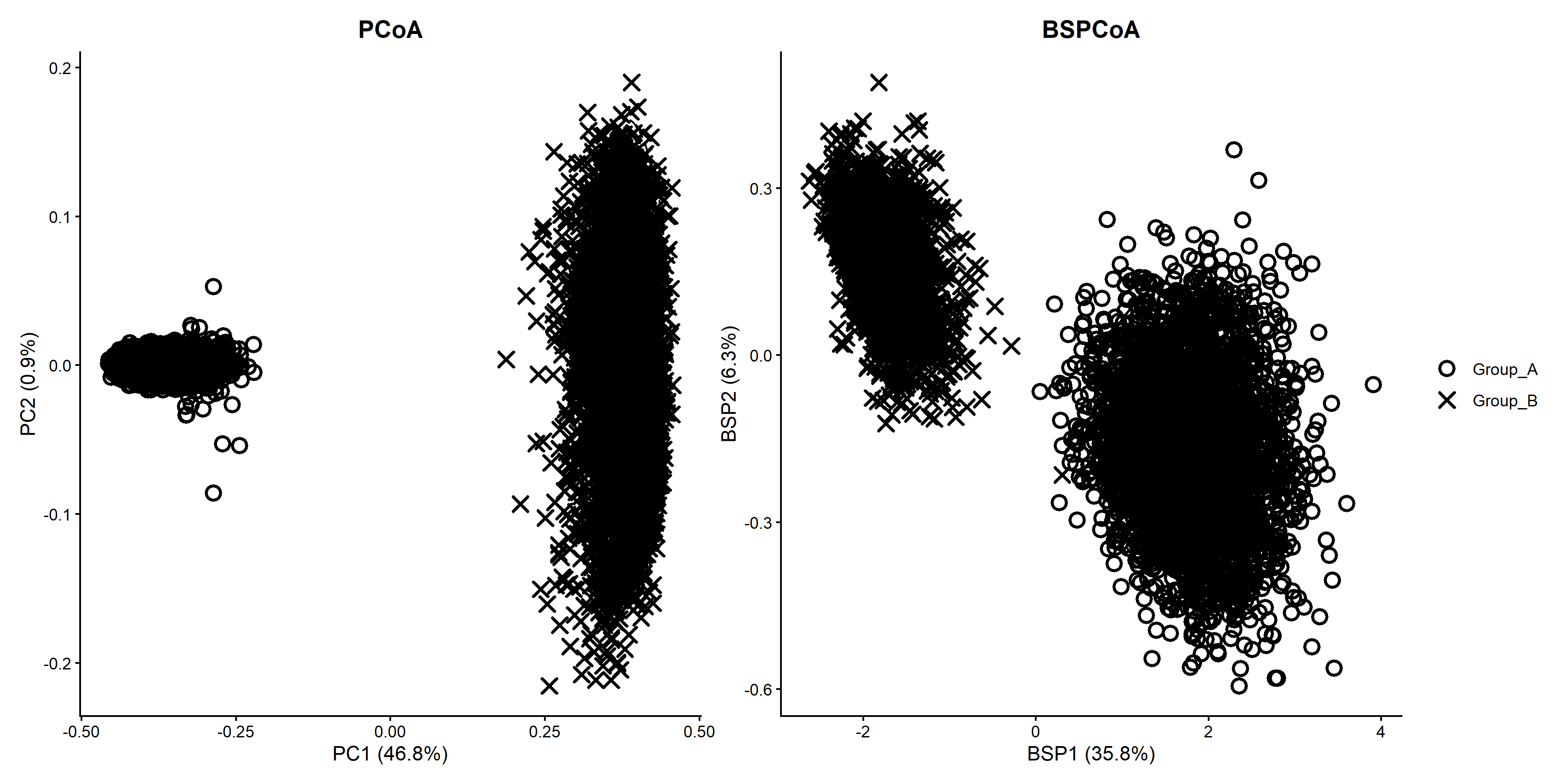}
\caption{$n=8000$}
\end{subfigure}
\caption{Clustering comparison between classical PCoA and BSPCoA under different sample sizes. The BSPCoA clustering results are obtained using the estimated projection matrix derived from a fixed subsample of size $m=100$.}
\label{fig:runtime_clustering}
\end{figure}


\begin{thebibliography}{99}

\bibitem[Anderson and Willis(2003)]{anderson2003}
Anderson, M.~J. and Willis, T.~J. (2003).
Canonical analysis of principal coordinates: A useful method of constrained ordination for ecology.
\emph{Ecology} \textbf{84}, 511--525.

\bibitem[Armagan et~al.(2011)Armagan, Clyde, and Dunson]{armagan2011generalized}
Armagan, A., Clyde, M., and Dunson, D.~B. (2011).
Generalized beta mixtures of Gaussians.
In \emph{Advances in Neural Information Processing Systems 24}, 523--531.

\bibitem[Bai(2018)]{bai2018high}
Bai, R. (2018).
\emph{Bayesian High-Dimensional Models with Scale-Mixture Shrinkage Priors}.
PhD dissertation, University of Florida.

\bibitem[Bai and Ghosh(2018)]{bai2018mbsp}
Bai, R. and Ghosh, M. (2018).
High-dimensional multivariate posterior consistency under global-local shrinkage priors.
\emph{Journal of Multivariate Analysis} \textbf{167}, 157--170.

\bibitem[Carvalho et~al.(2010)Carvalho, Polson, and Scott]{carvalho2010}
Carvalho, C.~M., Polson, N.~G., and Scott, J.~G. (2010).
The horseshoe estimator for sparse signals.
\emph{Biometrika} \textbf{97}, 465--480.

\bibitem[Gower(1966)]{gower1966some}
Gower, J.~C. (1966).
Some distance properties of latent root and vector methods used in multivariate analysis.
\emph{Biometrika} \textbf{53}, 325--338.

\bibitem[Guan and Dy(2009)]{guan2009}
Guan, Y. and Dy, J. (2009).
Sparse probabilistic principal component analysis.
In \emph{Proceedings of the 12th International Conference on Artificial Intelligence and Statistics}, 185--192.

\bibitem[Legendre and Anderson(1999)]{legendre1999}
Legendre, P. and Anderson, M.~J. (1999).
Distance-based redundancy analysis: Testing multispecies responses in multifactorial ecological experiments.
\emph{Ecological Monographs} \textbf{69}, 1--24.

\bibitem[Legendre and Gallagher(2001)]{legendre2001}
Legendre, P. and Gallagher, E.~D. (2001).
Ecologically meaningful transformations for ordination of species data.
\emph{Oecologia} \textbf{129}, 271--280.

\bibitem[Lin and Fong(2019)]{lin2018}
Lin, L. and Fong, D.~K.~H. (2019).
Bayesian multidimensional scaling procedure with variable selection.
\emph{Computational Statistics \& Data Analysis} \textbf{129}, 1--13.

\bibitem[Oh and Raftery(2001)]{oh2001}
Oh, M.~S. and Raftery, A.~E. (2001).
Bayesian multidimensional scaling and choice of dimension.
\emph{Journal of the American Statistical Association} \textbf{96}, 1031--1044.

\bibitem[Smits et~al.(2017)Smits, Leach, Sonnenburg, et~al.]{smits2017seasonal}
Smits, S.~A., Leach, J., Sonnenburg, E.~D., et~al. (2017).
Seasonal cycling in the gut microbiome of the Hadza hunter-gatherers of Tanzania.
\emph{Science} \textbf{357}, 802--806.

\bibitem[Wang et~al.(2025)Wang, Bai, and Huang]{wang2025ejs}
Wang, S.-H., Bai, R., and Huang, H.-H. (2025).
Two-step mixed-type multivariate Bayesian sparse variable selection with shrinkage priors.
\emph{Electronic Journal of Statistics} \textbf{19}, 397--457.

\bibitem[Zou et~al.(2006)Zou, Hastie, and Tibshirani]{zou2006sparse}
Zou, H., Hastie, T., and Tibshirani, R. (2006).
Sparse principal component analysis.
\emph{Journal of Computational and Graphical Statistics} \textbf{15}, 265--286.

\end{thebibliography}
\end{document}